\documentclass[runningheads,a4paper]{llncs}

\usepackage{graphicx}
\usepackage{latexsym}
\usepackage[cmex10]{amsmath}
\usepackage[tight,footnotesize,nooneline]{subfigure}
\usepackage{theorem}
\usepackage{amssymb}

\usepackage{tikz} 

\usepackage{url}
\urldef{\mailsa}\path|nco@maths.lth.se| 
\newcommand{\keywords}[1]{\par\addvspace\baselineskip
\noindent\keywordname\enspace\ignorespaces#1}

\newcommand{\real}{\mathbf{R}}

\begin{document}

\mainmatter  


\title{On the Taut String Interpretation of the One-dimensional Rudin-Osher-Fatemi Model:\\ A New Proof, a Fundamental Estimate\\ and Some Applications} 

\titlerunning{On the Taut String Interpretation}

\author{Niels Chr. Overgaard}
\authorrunning{N. Chr. Overgaard}

\institute{Centre for Mathematical Sciences\\Lund University, Sweden\\
\mailsa}

%
%

\toctitle{Lecture Notes in Computer Science}
\tocauthor{Authors' Instructions}
\maketitle

\begin{abstract}
A new proof of the equivalence of the Taut String Algorithm and the one-dimensional Rudin-Osher-Fatemi model is presented. Based on duality and the projection theorem in Hilbert space, the proof is strictly elementary. Existence and uniqueness of solutions to both denoising models follow as by-products. The standard convergence properties of the denoised signal, as the regularizing parameter tends to zero, are recalled and efficient proofs provided. Moreover, a new and fundamental bound on the denoised signal is derived. This bound implies, among other things, the strong convergence (in the space of functions of bounded variation) of the denoised signal to the insignal as the regularization parameter vanishes. The methods developed in the paper can be modified to cover other interesting applications such as isotonic regression.

\keywords{Total variation minimization, Regression splines, Lewy-Stampacchia inequality, Isotonic Regression, Moreau-Yosida approximation}
\end{abstract}

\section{Introduction}\label{sec:intro}

In 2017 it is 25 years ago Leonid Rudin, Stanley Osher and Emad Fatemi proposed their now classical model for edge-preserving denoising of images~\cite{ROF-1992}. The present paper will investigate the properties of the one-dimensional version of the Rudin-Osher-Fatemi (ROF) model: To a given (noisy) signal $f\in L^2(I)$, defined on a bounded interval $I=(a,b)$, associate the (ROF) functional
\[
E_\lambda(u) = \lambda\int_a^b | u'(x)|\,dx + \frac{1}{2}\int_a^b(f(x)-u(x))^2\,dx\;,
\]
where $\lambda>0$ is a parameter. Define the denoised signal as the function $u_\lambda\in BV(I)$ which minimizes this energy, i.e.,
\begin{equation}\label{eq:ROFdef}
u_\lambda := \operatorname*{arg\,min}_{u\in BV(I)} E_\lambda(u)\;.
\end{equation}
The first term in the ROF-functional is the {\em total variation} $\int_a^b | u' |\,dx$ of the function $u$ multiplied by the positive weight $\lambda$, and $BV(I)$ denotes the set of functions on $I$ with finite total variation. Precise definitions will be given below.

The one-dimensional ROF model will compared to the {\em Taut string algorithm}, which is an alternative method for denoising of signals with applications in statistics, non-parametric estimation, real-time communication systems and stochastic analysis. The taut string algorithm has been extensively studied in the discrete setting by Mammen and van de Geer~\cite{Mammen-deGeer-1997}, Davies and Kovac~\cite{Davies-Kovac-2001} and by D\"{u}mbgen and Kovac~\cite{Dumbgen-Kovacs-2009}. Very recently, using methods from interpolation theory (Peetre's $K$-functional and the notion of invariant $K$-minimal sets), Setterqvist~\cite{Setterqvist-thesis-2016} has investigated the limits to which taut string methods may be extended. In the continuous setting, for analogue signals, the Taut string algorithm can be stated in the following manner: (Illustrated in Fig.~\ref{fig:TSA}.)

\begin{center}
	\begin{minipage}{1.0\textwidth}
		\rule[1mm]{1.0\textwidth}{.25mm}\\
		{\bf  The Taut String Algorithm}\\
		\rule[2mm]{1.0\textwidth}{.5mm}
		
		\noindent
		{\sc Input:} A bounded interval $I=(a,b)$, a (noisy) signal $f\in L^2(I)$ and a parameter $\lambda >0$.
		
		\noindent
		{\sc Output:} The denoised signal $f_\lambda\in L^2(I)$.
		
		\noindent
		{\sc Step 1.} Compute the cumulative signal,
		\[
		F(x) = \int_a^x f(t)\,dt\;,\quad x\in \overline{I}=[a,b]\,.
		\]
		
		\noindent
		{\sc Step 2.} Set 
		\begin{multline*}
		T_\lambda =\Big\{  W\in H^1(I)\,:\,  W(a)=F(a),\, W(b)=F(b),
		\text{ and } F-\lambda\leq W\leq F+\lambda  \,\Big\}\;.
		\end{multline*}
		(Graphically, this is the set of weakly differentiable $L^2$-functions with $L^2$-derivatives whose graphs lie within a tube around $F$ with the width $\lambda$.)
		
		\noindent
		{\sc Step 3.} Compute the unique minimizer $W_\lambda\in T_\lambda$ of the energy
		\begin{equation}\label{eq:energy}
		\min_{W\in T_\lambda} E(W):=\frac{1}{2}\int_a^b W'(x)^2\,dx\;.\quad\text{(`Taut string')}
		\end{equation}
		
		\noindent
		{\sc Step 4.} Set $f_\lambda = W_\lambda'$ (distributional derivative.)
		
		\noindent
		{\sc End.}
		
		\rule{1.0\textwidth}{.25mm}\\
	\end{minipage}
\end{center}

In its original formulation, the Taut string algorithm instruct us to find the solution of the shortest path problem
\begin{equation}\label{eq:SPP}
	\min_{W\in T_\lambda} L(W):=\int_a^b \sqrt{1+W'(x)^2}\,dx\;,
\end{equation}
hence the epithet `taut string'. However, the `stretched rubber band'-energy $E$ in step 3 of the algorithm is not only easier to handle analytically, it also has precisely the same solution as (\ref{eq:SPP}). While this is intuitively clear from our everyday experience with rubber bands and strings, the assertion is, mathematically speaking, not equally self-evident so a proof is offered in Appendix~\ref{sec:appendix}. 

The main purpose of this paper, the first of two, is to present a new, elementary proof of the following remarkable result:

\begin{theorem}\label{thm:equivalence}
	The Taut string algorithm and the ROF model yield the same solution; $f_\lambda = u_\lambda$.
\end{theorem}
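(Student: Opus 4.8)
The plan is to show that, once integrated, the optimality conditions of the two problems become literally the same scalar condition, so that the uniqueness of each minimizer forces $f_\lambda = u_\lambda$. First I would record existence and uniqueness on both sides: $E_\lambda$ is strictly convex on $BV(I)$ because the fidelity term $\tfrac12\int_a^b (f-u)^2$ is strictly convex in $u$, while $T_\lambda$ is a closed convex subset of the Hilbert space $H^1(I)$ on which $W\mapsto \tfrac12\int_a^b W'^2$ is strictly convex, so the projection theorem delivers a unique $W_\lambda$. Writing $g=W'$ and recovering $W(x)=F(a)+\int_a^x g$, Step 3 is equivalent to finding the element $f_\lambda=W_\lambda'$ of \emph{least} $L^2$-norm in the closed convex set $C\subset L^2(I)$ cut out by the affine condition $\int_a^b g = F(b)-F(a)$ and the tube condition $\bigl|F(a)+\int_a^x g - F(x)\bigr|\le\lambda$. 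The projection theorem then characterises $f_\lambda$ by the variational inequality $\int_a^b f_\lambda\,(g-f_\lambda)\,dx\ge 0$ for all $g\in C$.

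Next I would write the optimality condition for $u_\lambda$ through the dual representation of the total variation, $\lambda\int_a^b|u'| = \sup\{\int_a^b u\,\psi'\,dx : \psi\in C_c^1(I),\ \|\psi\|_\infty\le\lambda\}$. Since the fidelity term is smooth, $u_\lambda$ minimises $E_\lambda$ precisely when the residual $r_\lambda:=f-u_\lambda$ equals $\psi'$ for an admissible $\psi$ attaining the supremum, that is $\|\psi\|_\infty\le\lambda$ with $\psi$ saturated wherever the measure $Du_\lambda$ charges mass. The essential move is to \emph{integrate} this identity: setting $R_\lambda(x)=\int_a^x r_\lambda = F(x)-\int_a^x u_\lambda$, the compact support of the test fields gives $R_\lambda=\psi$ with $R_\lambda(a)=R_\lambda(b)=0$ (which is exactly the mean preservation $\int_a^b u_\lambda = \int_a^b f$), while $\|\psi\|_\infty\le\lambda$ becomes $\|R_\lambda\|_\infty\le\lambda$. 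Hence the primitive $W(x):=F(a)+\int_a^x u_\lambda$ of $u_\lambda$ satisfies $F-\lambda\le W\le F+\lambda$ together with $W(a)=F(a),\ W(b)=F(b)$, i.e. $W\in T_\lambda$.

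It then remains to recognise that the saturation condition on $\psi$ is exactly the taut-string complementarity. On the open set where $W$ lies strictly inside the tube one has $|R_\lambda|<\lambda$, hence $|\psi|<\lambda$, which forces $Du_\lambda=0$; there $u_\lambda=W'$ is constant and $W$ is affine, so $W''=0$. On the contact set $\{W=F\pm\lambda\}$ the potential $\psi$ is saturated and the measure $W''=Du_\lambda$ carries a definite sign dictated by the active constraint. I would then establish the projection inequality $\int_a^b W'\,(V-W)'\,dx\ge 0$ for every $V\in T_\lambda$ by integrating by parts, observing that the boundary term vanishes because $V$ and $W$ share the endpoint values, and checking on each contact set the sign of the resulting pairing of $V-W$ against the measure $W''$ using $F-\lambda\le V\le F+\lambda$. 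Since this is precisely the variational inequality characterising the projection $W_\lambda$, uniqueness gives $W=W_\lambda$, and therefore $u_\lambda=W'=W_\lambda'=f_\lambda$.

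The main obstacle I expect is the careful passage between the $BV$/measure-theoretic optimality condition for ROF and the $H^1$/$L^2$ variational inequality for the taut string: one must justify the dual representation of the total variation on the open interval with the correct \emph{natural} boundary behaviour, namely the vanishing of the test fields at $a$ and $b$, which is what produces the free boundary conditions $W(a)=F(a),\ W(b)=F(b)$ rather than spurious endpoint jumps, and one must make the saturation statement rigorous when $Du_\lambda$ is merely a measure. I anticipate that phrasing everything through the single continuous scalar $R_\lambda(x)=F(x)-\int_a^x u_\lambda$ and its uniform bound $\|R_\lambda\|_\infty\le\lambda$ is what renders the argument elementary, since it replaces the subdifferential of the total variation by a tube condition on a continuous function to which the Hilbert-space projection theorem applies directly.
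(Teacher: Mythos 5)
Your proposal takes a genuinely different route from the paper. The paper first proves a duality theorem (Theorem~\ref{thm:main}) with the projection theorem: the dual of the ROF problem is the constrained projection $\min_{\xi\in K}\|f-\lambda\xi'\|_{L^2(I)}$, and Theorem~\ref{thm:equivalence} then follows from the one-line change of variables $W=F-\lambda\xi$, which maps $K$ onto the tube $T_\lambda$; existence and uniqueness for both models drop out simultaneously. You instead derive optimality conditions for each problem separately and match them, concluding by uniqueness of the projection --- essentially Grasmair's strategy, which the paper explicitly contrasts with its own. The genuinely original parts of your plan are sound: identifying the dual potential with the primitive $R_\lambda=F-W$, reading the tube condition off $\|R_\lambda\|_\infty\le\lambda$, and verifying the taut-string variational inequality $\int_a^b W'(V-W)'\,dx\ge 0$ by pairing $V-W$ against the measure $W''=u_\lambda'$, whose positive and negative parts sit on the upper and lower contact sets respectively, where $F-\lambda\le V\le F+\lambda$ fixes the sign. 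This is a correct way to finish, and it is close in spirit to the contact-set argument the paper uses in its Appendix for the shortest-path lemma.

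There are, however, two genuine gaps. The pivotal claim --- ``$u_\lambda$ minimises $E_\lambda$ \emph{precisely when} $f-u_\lambda=\psi'$ for an admissible $\psi$ attaining the supremum'' --- is exactly the hard content of the theorem, and you leave it as an ``expected obstacle'' rather than prove it. The direction you need is necessity: the \emph{existence} of a saturated certificate for the minimizer. This cannot be taken from the dual representation over $C_0^1(I)$ fields, because for a general $u\in BV(I)$ that supremum is not attained (the paper's own example $u(x)=x$ on $(0,1)$), so the subdifferential of $J$ admits no description by saturated $C^1$ fields; the certificate must be sought in the larger class of $H_0^1(I)$ functions with $\|\psi\|_\infty\le\lambda$ --- your own identification $\psi=R_\lambda$ produces only an $H^1$ function --- and proving attainment there, together with the saturation identity $\langle u_\lambda,\psi'\rangle=\lambda J(u_\lambda)$, is precisely the paper's Theorem~\ref{thm:main}, obtained from Proposition~\ref{thm:projection} applied to $P_K(\lambda^{-1}F)$. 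In other words, the step you postpone is the theorem's core; everything downstream of it in your argument is correct but comparatively routine. Second, your argument runs only in the direction ROF $\Rightarrow$ taut string, so it needs existence of $u_\lambda$ as an input, and strict convexity of the fidelity term yields only uniqueness. You would have to add a direct-method argument (weak $L^2$ compactness of a minimizing sequence plus weak lower semicontinuity of $J$, which holds because $J$ is convex and strongly lower semicontinuous), or reverse the implication and verify that $W_\lambda'$ satisfies the ROF optimality conditions --- either way an extra step that the paper's duality route gets for free.
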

This is not new; a discrete version of this theorem was proved in \cite{Mammen-deGeer-1997} and in \cite{Davies-Kovac-2001}. In the continuum setting, the equivalence result was explicitly stated and proved by Grasmair~\cite{Grassmair-2007}. There is also an extensive treatment in the book by Scherzer et al.~\cite[Ch. 4]{Scherzer-etal-2009}. Indeed, a few years earlier, Hinterm\"{u}ller and Kunisch~\cite[p.7]{Hintermuller-Kunisch-2004} refer, in a brief (but inconclusive) remark, to the close relation between the ROF model and the Taut string algorithm.

\begin{figure*}
	\centering
	\subfigure[][The input signal {$f$}.]{
		\begin{tikzpicture}[scale=.775,>=stealth]
		\draw [thick,->] (-1,0) -- (5,0) node[anchor=north] {\small $x$};
		\draw [thick,->] (-.5,-1.5) -- (-.5,2.5) node[anchor=west] {\small $y$};
		\draw [thick] (0,1.5) -- (.95,1.5);
		\draw [fill] (0,1.5) circle (.05);
		\draw (1,1.5) circle (.05);
		\draw [thick] (1,-1) -- (1.95,-1);
		\draw [fill] (1,-1) circle (.05);
		\draw (2,-1) circle (.05);
		\draw [thick] (2,-.5) -- (2.95,-.5);
		\draw [fill] (2,-.5) circle (.05);
		\draw (3,-.5) circle (.05);
		\draw [thick] (3,1) -- (4,1);
		\draw [fill] (3,1) circle (.05);
		\draw [fill] (4,1) circle (.05) node [anchor=south west,orange]{\small $f$};
		\draw [thick] (0,.1) -- (0,-.1) node[anchor=north, orange] {\small $a$};
		\draw [thick] (1,.1) -- (1,-.1);
		\draw [thick] (2,.1) -- (2,-.1);
		\draw [thick] (3,.1) -- (3,-.1);
		\draw [thick] (4,.1) -- (4,-.1) node[anchor=north, orange] {\small $b$};
		\end{tikzpicture}
	}
	\qquad\qquad
	\subfigure[][The cumulative signal {$F$} and the tube $T_\lambda$.]{
		\begin{tikzpicture}[scale=.775,>=stealth]
		\draw [thick,->] (-1,0) -- (5,0) node[anchor=north] {\small $x$};
		\draw [thick,->] (-.5,-1.5) -- (-.5,2.5) node[anchor=west] {\small $y$};
		\draw [thick] (-.4,.5) -- (-.6,.5) node[anchor=east,orange]{\small $\lambda$}; 
		\draw [thick] (-.4,-.5) -- (-.6,-.5) node[anchor=east,orange]{\small $-\lambda$};
		\draw [thick] (0,0) -- (1,1.5) -- (2,.5) -- (3,0) -- (4,1);
		\draw [fill] (0,0) circle (.05);
		\draw [fill] (4,1) circle (.05) node [anchor=south west,orange]{\small $F$};
		
		\draw [thick,lightgray] (0,.5) -- (1,2) -- (2,1) -- (3,.5) -- (4,1.5);
		\draw [fill,lightgray] (0,.5) circle (.05);
		\draw [fill,lightgray] (4,1.5) circle (.05) node [anchor=south west,orange]{\small $F+\lambda$};
		
		\draw [thick,lightgray] (0,-.5) -- (1,1) -- (2,0) -- (3,-.5) -- (4,.5);
		\draw [fill,lightgray] (0,-.5) circle (.05);
		\draw [fill,lightgray] (4,.5) circle (.05) node [anchor=west,orange]{\small $F-\lambda$};
		
		\draw [thick] (0,.1) -- (0,-.1);
		\draw [thick] (1,.1) -- (1,-.1);
		\draw [thick] (2,.1) -- (2,-.1);
		\draw [thick] (3,.1) -- (3,-.1);
		\draw [thick] (4,.1) -- (4,-.1) node[anchor=north] {\small $b$};
		\end{tikzpicture}
	}
	\\
	\subfigure[][The taut string $W_\lambda$.]{
		\begin{tikzpicture}[scale=.775,>=stealth]
		\draw [thick,->] (-1,0) -- (5,0) node[anchor=north] {\small $x$};
		\draw [thick,->] (-.5,-1.5) -- (-.5,2.5) node[anchor=west] {\small $y$};
		\draw [thick] (0,.1) -- (0,-.1);
		\draw [thick] (1,.1) -- (1,-.1);
		\draw [thick] (2,.1) -- (2,-.1);
		\draw [thick] (3,.1) -- (3,-.1);
		\draw [thick] (4,.1) -- (4,-.1) node[anchor=north]{\small $b$};
		
		\draw [thin] (0,0) -- (1,1.5) -- (2,.5) -- (3,0) -- (4,1);

		\draw [thick,lightgray] (0,.5) -- (1,2) -- (2,1) -- (3,.5) -- (4,1.5);
		\draw [fill,lightgray] (0,.5) circle (.05);
		\draw [fill,lightgray] (4,1.5) circle (.05);
		
		\draw [thick,lightgray] (0,-.5) -- (1,1) -- (2,0) -- (3,-.5) -- (4,.5);
		\draw [fill,lightgray] (0,-.5) circle (.05);
		\draw [fill,lightgray] (4,.5) circle (.05);
		
		\draw [thick,orange] (0,0) -- (1,1) -- (3,.5) -- (4,1);
		\draw [fill,orange] (0,0) circle (.05);
		\draw [fill,orange] (4,1) circle (.05) node[anchor=west,orange]{\small $W_\lambda$};
		\end{tikzpicture}
	}
	\qquad\qquad
	\subfigure[][The denoised signal {$f_\lambda$} together with the input signal.]{
		\begin{tikzpicture}[scale=.775,>=stealth]
		\draw [thick,->] (-1,0) -- (5,0) node[anchor=north] {\small $x$};
		\draw [thick,->] (-.5,-1.5) -- (-.5,2.5) node[anchor=west] {\small $y$};
		
		\draw [thick] (0,1.5) -- (.95,1.5);
		\draw [fill] (0,1.5) circle (.05);
		\draw (1,1.5) circle (.05);
		
		\draw [thick,orange] (0,1) -- (.95,1);
		\draw [fill,orange] (0,1) circle (.05);
		\draw [orange] (1,1) circle (.05);
		\draw [thick] (1,-1) -- (1.95,-1);
		\draw [fill] (1,-1) circle (.05);
		\draw (2,-1) circle (.05);
		
		\draw [thick] (2,-.5) -- (2.95,-.5);
		\draw [fill] (2,-.5) circle (.05);
		\draw (3,-.5) circle (.05);
		
		\draw [thick,orange] (1,-.25) -- (2.95,-.25);
		\draw [fill,orange] (1,-.25) circle (.05);
		\draw [orange] (3,-.25) circle (.05);
		\draw [thick] (3,1) -- (4,1);
		\draw [fill] (3,1) circle (.05);
		\draw [fill] (4,1) circle (.05) node [anchor=south west]{\small $f=F'$};
		
		\draw [thick,orange] (3,.5) -- (4,.5);
		\draw [fill,orange] (3,.5) circle (.05);
		\draw [fill,orange] (4,.5) circle (.05) node [anchor=west,orange]{\small $f_\lambda=W_\lambda'$};
		
		\draw [thick] (0,.1) -- (0,-.1) node[anchor=north] {\small $a$};
		\draw [thick] (1,.1) -- (1,-.1);
		\draw [thick] (2,.1) -- (2,-.1);
		\draw [thick] (3,.1) -- (3,-.1);
		\draw [thick] (4,.1) -- (4,-.1) node[anchor=north] {\small $b$};
		\end{tikzpicture}
	}
	\caption{A graphical illustrations of the steps in the Taut string algorithm applied to a piecewise constant signal.}\label{fig:TSA}
\end{figure*}
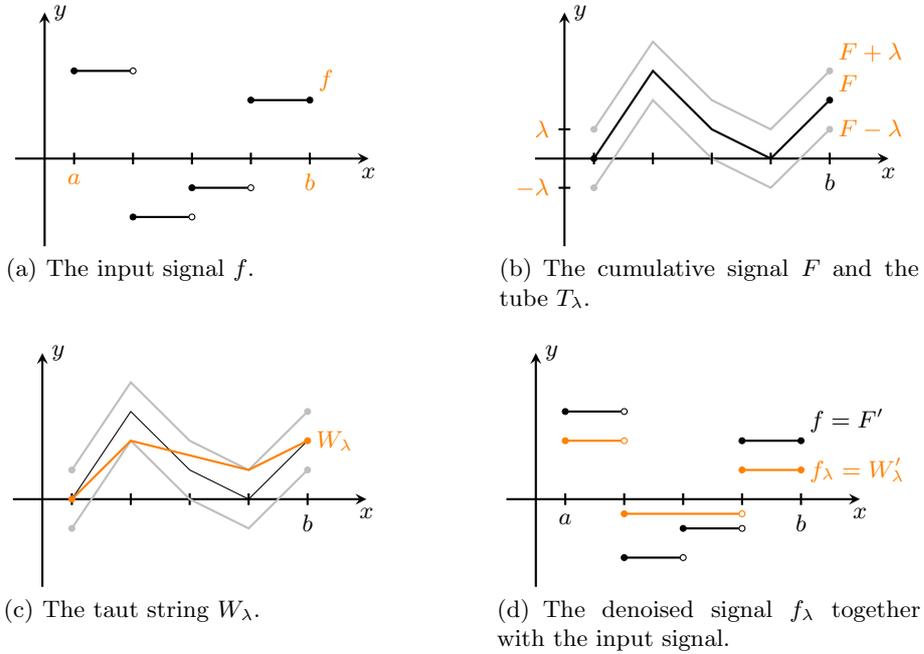

The second main result of the paper, whose proof we give in Section~\ref{sec:fundamental-estimate}, is the following ``fundamental'' estimate on the denoised signal:

\begin{theorem}
	\label{thm:L-S-consequence}
	If the signal $f$ belongs to $BV(I)$ then, for any $\lambda > 0$, the denoised signal $u_\lambda$ satisfies the inequality
	\begin{equation}\label{eq:LS-consequence}
	-(f')^- \leq u_\lambda' \leq (f')^+ \, ,
	\end{equation}
	where $(f')^+$ and $(f')^-$ denote the positive and the negative variations, respectively, of $f'$ (distributional derivative).
\end{theorem}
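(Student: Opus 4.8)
My plan is to exploit Theorem~\ref{thm:equivalence}, which lets me trade the statement about the ROF minimizer for one about the taut string. Since $u_\lambda=f_\lambda=W_\lambda'$, the claimed inequality $-(f')^-\le u_\lambda'\le (f')^+$ is an inequality between the signed measure $W_\lambda''$ and the Jordan components of $f'=F''$. Because the lower obstacle $F-\lambda$ and the upper obstacle $F+\lambda$ have the \emph{same} distributional second derivative $f'$, this is precisely the \emph{Lewy--Stampacchia inequality} for the double obstacle problem solved by $W_\lambda$ in Step~3 of the algorithm, and I would organize the whole proof as the derivation of this inequality.

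First I would treat the case of smooth data. If $f$ is Lipschitz, then $F\in C^2$, both obstacles are $C^2$, and $f'$ is a continuous function. The minimizer $W_\lambda$ of $\tfrac12\int W'^2$ over the tube is then $C^1$, and the variational inequality $\int W_\lambda'(v-W_\lambda)'\ge 0$ for $v\in T_\lambda$ shows that $W_\lambda''=0$ off the two contact sets, while on the interior of each contact set $W_\lambda$ coincides with an obstacle, so that $W_\lambda''=f'$ there. The sign conditions coming from the one-sided admissible variations force $W_\lambda''\le 0$ (hence $f'\le 0$) on the lower contact set and $W_\lambda''\ge 0$ (hence $f'\ge 0$) on the upper one. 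Consequently $u_\lambda'=W_\lambda''$ equals $-(f')^-$ on the lower set, $(f')^+$ on the upper set, and $0$ elsewhere, and the $C^1$-regularity guarantees that no spurious point masses appear at the junctions. This yields the pointwise bound $-(f')^-\le u_\lambda'\le (f')^+$ in the smooth case; equivalently, one may simply quote the classical Lewy--Stampacchia inequality for double obstacle problems.

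To pass from smooth data to a general $f\in BV(I)$, I would regularize by convolution. Extend $f$ by constants outside $[a,b]$ and set $f_n=\rho_n* f$ for a standard nonnegative mollifier $\rho_n$, so that $f_n$ is smooth and $f_n'=\rho_n* f'$. Let $u_{\lambda,n}$ be the ROF minimizer for data $f_n$. Since $u_\lambda$ is the proximal map of the (convex, positively one-homogeneous) total variation, it is $1$-Lipschitz in the $L^2$-data, so $u_{\lambda,n}\to u_\lambda$ in $L^2(I)$; the uniform energy bound $E_\lambda(u_{\lambda,n})\le\tfrac12\|f_n\|_2^2$ then forces $\int|u_{\lambda,n}'|$ to stay bounded, whence $u_{\lambda,n}'\to u_\lambda'$ weakly-$*$ as measures. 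The smooth-data step gives $u_{\lambda,n}'\le (f_n')^+$, and the elementary domination
\[
(f_n')^+=(\rho_n* f')^+\le \rho_n*(f')^+
\]
(which holds since $\rho_n\ge 0$, so that $\rho_n* f'=\rho_n*(f')^+-\rho_n*(f')^-$ with both terms nonnegative) lets me test against any $\varphi\in C_c(I)$ with $\varphi\ge 0$:
\[
\int_I \varphi\,u_{\lambda,n}'\le \int_I (\rho_n*\varphi)\,d(f')^+\;\xrightarrow[n\to\infty]{}\;\int_I \varphi\,d(f')^+ .
\]
Passing to the limit on the left gives $u_\lambda'\le (f')^+$, and the lower bound follows symmetrically.

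The step I expect to be the main obstacle is this final passage to the limit, for two related reasons. First, because $f$ may jump, $u_\lambda'$ genuinely carries point masses at the jumps of $f$, and one must ensure these are dominated by the corresponding atoms of $(f')^+$ and $(f')^-$ rather than lost or doubled; the domination $(\rho_n* f')^+\le\rho_n*(f')^+$ is exactly what makes the inequality survive the weak-$*$ limit, since the Jordan decomposition is lower semicontinuous but \emph{not} continuous under weak-$*$ convergence. Second, the smooth-case inequality itself must be handled with care at the boundary of the contact sets, where the $C^1$-regularity of the one-dimensional taut string, together with the exact matching of slopes at the junctions, is what rules out extraneous singular contributions to $W_\lambda''$. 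A direct penalization argument for general $f$, keeping $W$ in $H^1$ rather than differentiating, is tempting but collides with the fact that $F''=f'$ is only a measure, which is precisely why I prefer to concentrate all the measure-theoretic difficulty in the single convolution limit above.
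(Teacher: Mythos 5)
Your proof is correct, and its skeleton is the same as the paper's: both reduce the claim, via Theorem~\ref{thm:equivalence}, to a Lewy--Stampacchia inequality for the bilateral obstacle problem solved by the taut string, with obstacles $F\pm\lambda$ whose second derivatives both equal $f'$. The difference is how the measure-data version of that inequality is justified. The paper simply invokes the abstract extension of Lewy--Stampacchia to bilateral obstacle problems due to Gigli and Mosconi~\cite{Gigli-Mosconi-2015}, which applies directly when $F''=f'$ is merely a signed measure; this makes the proof a few lines long but rests on a heavy external result. You instead assume only the classical smooth-obstacle case (which you sketch in one dimension) and upgrade to $f\in BV(I)$ by mollification, using the non-expansiveness of denoising (Corollary~\ref{thm:non-expansiveness}), a uniform total-variation bound, distributional convergence $u_{\lambda,n}'\to u_\lambda'$, and the lattice domination $(\rho_n*f')^{\pm}\le \rho_n*(f')^{\pm}$ to carry the inequality through the limit despite the Jordan decomposition not being weak-$*$ continuous --- which is exactly the right device. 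Your route is longer but self-contained and elementary; the paper's is shorter at the price of citing a much more general theorem. Two small blemishes, neither fatal: Lipschitz data give $F\in C^{1,1}$, not $C^{2}$, and $f'$ need not be continuous (harmless, since your mollified $f_n$ are $C^\infty$, which is all you use); and in the smooth case the contact sets may have boundary of positive measure, so the dichotomy ``interior of contact set / off the contact sets'' should be replaced by the standard fact that $W^{2,1}$-functions agreeing on a set have equal first and second derivatives a.e.\ on it, or by simply quoting the classical bilateral inequality as you propose.
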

Just like $f'$, the derivative $u_\lambda'$ is computed in the distributional sense and is, in general, a signed measure. Furthermore, recall that $(f')^+$ and $(f')^-$ are finite positive measures satisfying $f'=(f')^+ - (f')^-$, see e.g. Rudin~\cite[Sec. 6.6]{Rudin-1986}.
The proof of the theorem is based on (an extension of) an inequality of H. Lewi and G. Stampacchia~\cite{Lewy-Stampacchia-1970} and uses the Taut String-interpretation of the ROF model (Theorem~\ref{thm:equivalence}) in an essential way.

As a significant consequence of Theorem~\ref{thm:L-S-consequence} we find that if the insignal $f$ belongs to $BV(I)$ then $u_\lambda\to f$ strongly in $BV(I)$ as $\lambda\to 0+$ . The usual Moreau-Yosida approximation result, see e.g. \cite[Ch. 17]{Ambrosio-etal-2000}, only gives the weaker $u_\lambda\to f$ in $L^2(I)$ and $\int_I |u_\lambda'|\,dx\to \int_I|f'|\,dx$ as $\lambda$ tends to zero.

To summarize, the main contributions of the paper are: i) The new proof of the equivalence theorem, presented here with the general reader in mind. ii) Establishment of a fundamental estimate on the solution of the ROF model. iii) The re-derivation some known properties of the ROF model and proof of some precise results on the rate of convergence $u_\lambda\to f$ as $\lambda$ tends to zero (Propositions~\ref{thm:properties}--\ref{thm:L2-continuity})---collecting all such result in one place! iv) The proof of the strong convergence result mentioned above (Proposition~\ref{thm:strong-convergence}). v) A new and slick proof of the (known) fact that $u_\lambda$ is a semi-group with respect to $\lambda$ (Proposition~\ref{thm:semi-group}). vi) In the final section we indicated how our method of proof can be modified and applied to \emph{isotonic regression}.

\section{Our Analysis Toolbox}\label{sec:tools}
Throughout this paper $I$ denotes an open, bounded interval $(a,b)$, where $a<b$ are real numbers, and $\bar{I}=[a,b]$ is the corresponding closed interval.

$C_0^1(I)$ denotes the space of continuously differentiable (test-)functions $\xi:I\to\real$ with compact support in $I$.

For $1\leq p \leq \infty$, $L^p(I)$ denotes the Lebesgue space of measurable functions $f:I\to\real$ with finite $p$-norm; $\|f\|_p:= \big(\int_a^b |f(x)|^p\,dx\big)^{1/p} < \infty$, when $p$ is finite, and $\|f\|_\infty = \operatorname{ess\,sup}_{x\in I}|f(x)| <\infty$ when $p=\infty$. The space $L^2(I)$ is a Hilbert space with the inner product $
\langle f,g\rangle =\langle f,g\rangle_{L^2(I)} := \int_a^b f(x)g(x)\,dx$ and the corresponding norm $\|f\|:=\sqrt{\langle f,f\rangle_{L^2(I)}}=\|f\|_2$.

We are going to need the Sobolev spaces over $L^2$:
\[
H^1(I)=\big\{ u\in L^2(I)\,:\, u'\in L^2(I) \big\}\;,
\]
were $u'$ denotes the distributional derivative of $u$. This is a Hilbert space with inner product $\langle u,v\rangle_{H^1}:= \langle u,v\rangle + \langle u',v'\rangle$ and norm $\|u\|_{H^1} =( \|u'\|_2^2 + \|u\|_2^2 )^{1/2}$. Any $u\in H^1(I)$ can, after correction on a set of measure zero, be identified with a unique function in $C(\bar{I})$. In particular, a unique value $u(x)$ can be assigned to $u$ for every $x\in\bar{I}$. 

The following subspace of $H^1(I)$ plays an important role in our analysis:
\[
H_0^1(I) = \big\{ u\in H^1(I)\,:\, u(a)=0\text{ and } u(b)=0\, \big\}\;.
\]
Here $\langle u,v\rangle_{H_0^1(I)} := \int_a^b u'(x)v'(x)\,dx$
defines an inner product on $H_0^1(I)$ whose induced norm $\|u\|_{H_0^1(I)}=\|u'\|_2$ is equivalent to the norm inherited from $H^1(I)$.

Finally, let $H$ be a (general) real Hilbert space with inner product between $u,v\in H$ denoted by $\langle u,v \rangle$ and the corresponding norm $\| u\|=\sqrt{\langle u,u \rangle}$. The following result is standard, see Br\'{e}zis~\cite[Th\'{e}or\`{e}me V.2]{Brezis-1999}:

\begin{proposition}[Projection Theorem]\label{thm:projection}
	Let $K\subset H$ be a non-empty closed convex set. Then for every $\varphi\in H$ there exists a unique point $u\in K$ such that
	\[
	\| \varphi-u \| = \min_{v\in K}\| \varphi- v \|.
	\]
	Moreover, the minimizer $u$ is characterized by the following property:
	\[
	u\in K\quad\text{and}\quad \langle \varphi-u,v-u\rangle \leq 0,\text{ for all $v\in K$}.
	\]
	The point $u$ is called the projection of $\varphi$ onto $K$, and is denoted $u=P_K(\varphi)$.
\end{proposition}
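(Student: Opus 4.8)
The plan is to establish existence, then the variational characterization, then uniqueness, letting the completeness of $H$ and the parallelogram law do the essential work. First I would treat existence. Put $d=\inf_{v\in K}\|\varphi-v\|$ and choose a minimizing sequence $(v_n)\subset K$ with $\|\varphi-v_n\|\to d$. The crucial point is that convexity of $K$ forces $\tfrac12(v_n+v_m)\in K$, so that $\|\varphi-\tfrac12(v_n+v_m)\|\geq d$. Applying the parallelogram law to the vectors $\varphi-v_n$ and $\varphi-v_m$ then yields
\[
\tfrac14\|v_n-v_m\|^2 \leq \tfrac12\|\varphi-v_n\|^2+\tfrac12\|\varphi-v_m\|^2-d^2 .
\]
As $n,m\to\infty$ the right-hand side tends to $\tfrac12 d^2+\tfrac12 d^2-d^2=0$, so $(v_n)$ is a Cauchy sequence. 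By completeness it converges to some $u\in H$; since $K$ is closed, $u\in K$, and continuity of the norm gives $\|\varphi-u\|=d$.

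Next I would derive the characterization. If $u\in K$ is a minimizer, then for any $v\in K$ and $t\in(0,1]$ convexity gives $u+t(v-u)\in K$, and expanding the inequality $\|\varphi-u-t(v-u)\|^2\geq\|\varphi-u\|^2$ produces $-2t\langle\varphi-u,v-u\rangle+t^2\|v-u\|^2\geq0$; dividing by $t$ and letting $t\to0+$ yields $\langle\varphi-u,v-u\rangle\leq0$. Conversely, if $u\in K$ satisfies this inequality, the identity
\[
\|\varphi-v\|^2=\|\varphi-u\|^2-2\langle\varphi-u,v-u\rangle+\|v-u\|^2
\]
shows $\|\varphi-v\|^2\geq\|\varphi-u\|^2$ for every $v\in K$, so $u$ is the minimizer.

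Uniqueness then follows cleanly from the characterization: if $u_1,u_2\in K$ are both minimizers, then testing the inequality for $u_1$ against $v=u_2$ and that for $u_2$ against $v=u_1$, and adding the two, collapses to $\|u_1-u_2\|^2\leq0$, whence $u_1=u_2$.

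I expect the existence step to be the crux, for it is exactly there that the Hilbert-space structure is indispensable: the parallelogram law---valid only for norms arising from an inner product---is what converts the convexity of $K$ and the definition of $d$ into the Cauchy estimate. Once existence is in hand, the characterization and uniqueness are routine inner-product algebra.
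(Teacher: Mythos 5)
Your proof is correct: the parallelogram-law argument for existence, the variational characterization via $u+t(v-u)\in K$, and uniqueness by adding the two inequalities are all sound. Note that the paper does not prove this proposition at all---it cites it as standard from Br\'{e}zis (Th\'{e}or\`{e}me V.2)---and your argument is precisely the classical proof given in that reference, so there is nothing to reconcile between the two.
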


\section{Precise Definition of the ROF Model}\label{sec:BV}

The expression $\int_I |u'|\,dx$ for the total variation, makes sense for $u\in H^1(I)$ but is otherwise merely a convenient symbol. A more general and precise definition is needed; one which works in the case when $u'$ does not exist in the classical sense. The standard way to define the total variation is via duality: For $u\in L^1(I)$ set
\begin{equation*}\label{eq:TVdef}
J(u) = \sup\Big\{ \int_a^bu(x)\xi'(x)\,dx\, :\, \xi\in C_0^1(I),\, \|\xi\|_\infty\leq 1\Big\}\; .
\end{equation*}
If $J(u)<\infty$, $u$ is said to be a {\em function of bounded variation} on $I$, and $J(u)$ is called the {\em total variation} of $u$ (using the same notation as \cite{Chambolle-2004}). The set of all integrable functions on $I$ of bounded variation is denoted $BV(I)$, that is, $BV(I) = \big\{u\in L^1(I)\, :\, J(u) <\infty \big\}$. This becomes a Banach space when equipped with the norm $\|u\|_{BV}:=J(u) + \|u\|_{L^1}$. Notice that, as already indicated, if $u\in H^1(I)$ then $J(u)=\int_I |u'|\,dx < \infty$, so $u\in BV(I)$.

Let us illustrate how the definition works for a function with a jump discontinuity:

\begin{example}\label{ex:first} Let $u(x)=\operatorname{sign}(x)$ on the interval $I=(-1,1)$. For any $\xi\in C_0^1(I)$, satisfying $|\xi(x)|\leq 1$ for all $x\in I$, we have
\[
\int_{-1}^{1} u(x)\xi'(x)\,dx = \int_0^1\xi'(x)\,dx -\int_{-1}^0\xi'(x)\,dx=-2\xi(0) \leq 2,
\]
where equality holds for any admissible $\xi$ which satisfies $\xi(0)=-1$. So
$J(u)=2$ and $u\in BV(I)$, as predicted by intuition.
\end{example}

In this example the supremum is attained by many choices of $\xi$. This is not always the case; if $u(x)=x$ on $I=(0,1)$ then $J(u)=1$, but the supremum is not attained by any admissible test function. 

The following lemma shows that the definition of the total variation $J$ and the space $BV(I)$ can be moved to a Hilbert space-setting involving $L^2$ and $H_0^1$. 

\begin{lemma}\label{thm:newTVdef}
	Every $u\in BV(I)$ belongs to $L^2(I)$ and 
	\begin{equation}\label{eq:newdef}
	J(u)=\sup_{\xi\in K}\, \langle u,\xi'\rangle_{L^2(I)}\;,
	\end{equation}
	where $K=\{\,\xi\in H_0^1(I)\,:\, \|\xi\|_\infty\leq 1 \,\}$, which is a closed and convex  set in $H_0^1(I)$.
\end{lemma}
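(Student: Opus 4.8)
The plan is to establish three facts in turn: the inclusion $BV(I)\subset L^2(I)$ (needed so that every inner product $\langle u,\xi'\rangle_{L^2(I)}$ in \eqref{eq:newdef} is meaningful), the identity \eqref{eq:newdef} itself, and the closedness and convexity of $K$. For the inclusion I would use that $I$ is bounded together with the one-dimensional fact that a $BV$-function is bounded, with oscillation controlled by its total variation. The clean way to see this is to mollify: for $u\in BV(I)$ the smooth approximants $u_\varepsilon=u*\rho_\varepsilon$ satisfy $\int_I|u_\varepsilon'|\,dx\le J(u)$ and $u_\varepsilon\to u$ in $L^1(I)$, and for a smooth function on an interval one has $\sup u_\varepsilon-\inf u_\varepsilon\le \int_I|u_\varepsilon'|\,dx\le J(u)$. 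Combining this oscillation bound with $\inf u_\varepsilon\le \frac{1}{b-a}\int_I u_\varepsilon\le\sup u_\varepsilon$ gives the uniform estimate $\|u_\varepsilon\|_\infty\le \frac{1}{b-a}\|u_\varepsilon\|_{L^1}+J(u)$; passing to the limit yields $\|u\|_\infty\le \frac{1}{b-a}\|u\|_{L^1}+J(u)<\infty$, and since $|I|<\infty$ this forces $u\in L^2(I)$.

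For the identity \eqref{eq:newdef}, one inequality is free: any admissible test function in the dual definition of $J$, that is any $\xi\in C_0^1(I)$ with $\|\xi\|_\infty\le1$, already belongs to $K$, because such $\xi$ lies in $H_0^1(I)$ and satisfies the sup-norm constraint; hence $J(u)\le\sup_{\xi\in K}\langle u,\xi'\rangle$. The work lies in the reverse inequality, where I must approximate an arbitrary $\xi\in K$ --- which is merely in $H_0^1(I)$ and need be neither smooth nor compactly supported --- by admissible $C_0^1$ test functions while never violating $\|\xi\|_\infty\le1$. The key step is a rescaling trick: by density of $C_0^1(I)$ in $H_0^1(I)$ pick $\eta_n\in C_0^1(I)$ with $\eta_n\to\xi$ in $H_0^1(I)$; the continuous embedding $H^1(I)\hookrightarrow C(\bar{I})$ then gives $\eta_n\to\xi$ uniformly, so that $c_n:=\max(1,\|\eta_n\|_\infty)\to\max(1,\|\xi\|_\infty)=1$. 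Setting $\xi_n:=\eta_n/c_n\in C_0^1(I)$ we keep $\|\xi_n\|_\infty\le1$ while still $\xi_n'=\eta_n'/c_n\to\xi'$ in $L^2(I)$; since $u\in L^2(I)$, Cauchy-Schwarz gives $\langle u,\xi_n'\rangle\to\langle u,\xi'\rangle$, so $\langle u,\xi'\rangle\le J(u)$ for every $\xi\in K$, and taking the supremum completes the proof of \eqref{eq:newdef}.

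Finally, $K$ is convex since $H_0^1(I)$ is a vector space and $\xi\mapsto\|\xi\|_\infty$ is a convex functional, so a convex combination of functions bounded by $1$ in sup-norm is again bounded by $1$; and $K$ is closed because if $\xi_n\in K$ and $\xi_n\to\xi$ in $H_0^1(I)$, the continuous embedding into $C(\bar{I})$ again forces uniform convergence, whence $|\xi(x)|=\lim_n|\xi_n(x)|\le1$ for all $x$ and $\xi\in K$. I expect the \emph{main obstacle} to be precisely the reverse inequality in \eqref{eq:newdef}: maintaining the constraint $\|\xi\|_\infty\le1$ throughout the approximation is exactly what the rescaling by $c_n$ secures, and it is this simultaneous control of the sup-norm and of the $L^2$-convergence of the derivatives --- which a crude truncation would spoil --- that makes the argument work.
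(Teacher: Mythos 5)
Your proof is correct and follows essentially the same route as the paper: both arguments rest on the density of $C_0^1(I)$ in $H_0^1(I)$ combined with the continuous embedding of $H_0^1(I)$ into $L^\infty(I)$, both get closedness and convexity of $K$ from that same embedding, and your device of normalizing the approximants by $c_n=\max(1,\|\eta_n\|_\infty)$ is interchangeable with the paper's version, which instead proves the homogeneous inequality $\langle u,\xi'\rangle_{L^2(I)}\le J(u)\,\|\xi\|_\infty$ for all $\xi\in H_0^1(I)$ and then specializes to $\|\xi\|_\infty\le 1$. The only substantive difference is that the paper simply cites the Sobolev embedding $BV(I)\subset L^\infty(I)$ where you re-prove it by mollification --- a perfectly good elementary alternative, provided you note that $u*\rho_\varepsilon$ is only defined on subintervals at distance $\varepsilon$ from $\partial I$, so the variation and oscillation bounds must be stated there and the limit $\varepsilon\to 0$ taken afterwards.
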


\begin{proof}
	If $u\in BV(I)$ then Sobolev's lemma for functions of bounded variation, see \cite[p. 152]{Ambrosio-etal-2000}, ensures that $u\in L^\infty(I)$. This in turn implies $u\in L^2(I)$ because $I$ is bounded. The (ordinary) Sobolev's lemma asserts that $H_0^1(I)$ is continuously embedded in $L^\infty(I)$. Since $K$ is the inverse image under the embedding map of the unit ball in $L^\infty(I)$, which is both closed and convex, we draw the conclusion that $K$ is closed and convex in $H_0^1$.
	
	It only remains to prove (\ref{eq:newdef}). Clearly $J(u)$ cannot exceed the right hand side because the set $\{\, \xi\in C_0^1(I)\,:\, \|\xi\|_\infty\leq 1\,\}$ is contained in $K$. To verify that equality holds it is enough to prove the inequality
	\begin{equation}\label{eq:TVinequality}
	\langle u,\xi'\rangle_{L^2(I)} \leq J(u) \| \xi\|_{\infty},\quad\text{for all $\xi\in H_0^1(I)$},
	\end{equation}
	as it implies that the right hand side of (\ref{eq:newdef}) cannot exceed $J(u)$.
	To do this, we first notice that the inequality in holds for all $\zeta\in C_0^1(I)$. This follows by applying homogeneity to the definition of $J(u)$. Secondly, if $\xi\in H_0^1(I)$ we can use that $C_0^1(I)$ is dense in $H_0^1(I)$ and find functions $\zeta_n\in C_0^1(I)$ such that $\zeta_n\to \xi$ in $H_0^1(I)$ (and in $L^\infty(I)$ by the continuous embedding). It follows that
	\[
	\langle u,\xi'\rangle_{L^2(I)} = \lim_{n\to\infty} \langle u,\zeta_n'\rangle_{L^2(I)} \leq J(u) \lim_{n\to\infty} \| \zeta_n\|_{\infty} = J(u) \| \xi\|_{\infty},
	\]
	which establishes (\ref{eq:TVinequality}) and the proof is complete.\qed
\end{proof}

The inequality (\ref{eq:TVinequality}) combined with the Riesz representation theorem (cf. e.g. \cite[Thm. 1.54]{Ambrosio-etal-2000}) implies that the distributional derivative $u'$ of $u\in BV(I)$ is a signed (Radon) measure $\mu$ on $I$, and that we may write $\langle u, \xi' \rangle_{L^2(I)} = \int_I \xi\,d\mu$. This will be useful later on.

We can now give the precise definition of the ROF model: For any $f\in L^2(I)$ and any real number $\lambda>0$ the ROF functional is the function $E_\lambda:BV(I)\to\real$ given by
\begin{equation}\label{eq:ROFenergy}
E_\lambda(u)=\lambda J(u)+\frac{1}{2}\| f- u \|_{L^2(I)}^2\;.
\end{equation}
Denoising according to the ROF model is the map $L^2(I)\ni f\mapsto u_\lambda \in BV(I)$ defined by (\ref{eq:ROFdef}). To emphasise the role of the in-signal $f$ we sometimes write $E_\lambda(f;u)$ instead of $E_\lambda(u)$. Well-posedness of the ROF model is demonstrated in the next section.

\section{Existence Theory for the ROF Model}\label{sec:main_result}

We begin with a simple observation: if $u\in BV(I)$  then $J(u+c) = J(u)$ for any real constant $c$. This property of the total variation has two important consequences. First of all, $E_\lambda(f;u)=E_\lambda(f-c;u-c)$ for any constant $c$. Taking $c$ to be the mean value of $f$ shows that we may assume, as we do throughout this paper, that the in-signal satisfies $\int_If\,dx=0$. This assumption implies that the cumulative signal $F(x)$ satisfies $F(a)=F(b)=0$, hence $F\in H_0^1(I)$. This plays an important role in our analysis.

Secondly, since $f$ has mean value zero, it is enough to minimize $E_\lambda$ over the subspace of $BV(I)$ consisting of functions with mean value zero. To see this, let $P$ be the orthogonal projection (in $L^2(I)$) onto this subspace. An easy computation yields the identity $E_\lambda(Pu) =E_\lambda(u)-\frac{1}{2}\|u-Pu\|^2$, which shows that $u$ can be a minimizer of $E_\lambda$ only if it belongs to the range of $P$.

The following result is the key theorem of our paper.

\begin{theorem}\label{thm:main}
	We have the equality
	\begin{equation}\label{eq:fundamental_id}
	\min_{u\in BV(I)} E_\lambda(u) = \max_{\xi\in K} \frac{1}{2}\Big\{ \| f\|_{L^2(I)}^2 - \| f-\lambda\xi'\|_{L^2(I)}^2\Big\}\;,
	\end{equation}
	with the minimum achieved by a unique $u_\lambda\in BV(I)$ and the maximum by a unique $\xi_\lambda\in K$. The two functions are related by the identity
	\begin{equation}\label{eq:fundamental-rel}
	u_\lambda=f-\lambda \xi_\lambda'\;,
	\end{equation}
	and satisfy
	\begin{equation}\label{eq:TV}
	J(u_\lambda) = \langle u_\lambda , \xi_\lambda' \rangle_{L^2(I)}\;.
	\end{equation}
	Moreover, if $u_\lambda\neq 0$, then $\|\xi_\lambda\|_\infty=1$. Conversely, if a pair of functions $\bar{u}\in BV(I)$ and $\bar{\xi}\in K$ satisfy both the condition in (\ref{eq:fundamental-rel}); $\bar{u}=f-\lambda \bar{\xi}'$, as well as (\ref{eq:TV}); $J(\bar{u}) = \langle \bar{u} , \bar{\xi'} \rangle_{L^2(I)}$, then $\bar{u}=u_\lambda$ and $\bar{\xi}=\xi_\lambda$. 
\end{theorem}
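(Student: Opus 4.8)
The plan is to read the right-hand side of (\ref{eq:fundamental_id}) as a projection and then let convex duality do the rest. Since $\|f\|^2$ is fixed, maximizing $\tfrac12(\|f\|^2-\|f-\lambda\xi'\|^2)$ over $\xi\in K$ is the same as minimizing $\|f-\lambda\xi'\|_{L^2(I)}^2$ over $\xi\in K$. The map $\xi\mapsto\xi'$ is a linear isometry of $H_0^1(I)$ onto the closed mean-zero subspace of $L^2(I)$, and $K$ is closed and convex (Lemma \ref{thm:newTVdef}), so $\{\lambda\xi':\xi\in K\}$ is a closed convex subset of $L^2(I)$. The Projection Theorem (Proposition \ref{thm:projection}) then produces a \emph{unique} closest point, which I would write as $\lambda\xi_\lambda'$ with $\xi_\lambda\in K$; injectivity of $\xi\mapsto\xi'$ on $H_0^1(I)$ makes $\xi_\lambda$ itself unique. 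I then \emph{define} $u_\lambda:=f-\lambda\xi_\lambda'$, so (\ref{eq:fundamental-rel}) holds by construction and $\xi_\lambda$ is the unique maximizer.

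The crux is to extract (\ref{eq:TV}) and the membership $u_\lambda\in BV(I)$ from the variational characterization of the projection. Written for the derivatives, the inequality in Proposition \ref{thm:projection} reads $\langle f-\lambda\xi_\lambda',\,\lambda\zeta'-\lambda\xi_\lambda'\rangle\leq 0$ for every $\zeta\in K$, that is, $\langle u_\lambda,\zeta'\rangle\leq\langle u_\lambda,\xi_\lambda'\rangle$ for all $\zeta\in K$. Taking the supremum over $\zeta\in K$ on the left and invoking the dual formula (\ref{eq:newdef}) — which extends verbatim to any $u\in L^2(I)$ with the convention that $J(u)$ may be $+\infty$ — gives $J(u_\lambda)\leq\langle u_\lambda,\xi_\lambda'\rangle<\infty$, while the reverse inequality is immediate since $\xi_\lambda\in K$. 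Hence $J(u_\lambda)<\infty$, so $u_\lambda\in BV(I)$, and (\ref{eq:TV}) holds. I expect this translation — from a Hilbert-space variational inequality to the identity $J(u_\lambda)=\langle u_\lambda,\xi_\lambda'\rangle$, which simultaneously certifies finiteness of the total variation — to be the main obstacle, as it is the one place where the duality of $J$ is genuinely used.

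With $\xi_\lambda$ and $u_\lambda$ in hand, every remaining assertion follows from a single completing-the-square estimate. For arbitrary $u\in BV(I)$ I would bound $J(u)\geq\langle u,\xi_\lambda'\rangle$ (valid as $\xi_\lambda\in K$) and expand the quadratic term to get
\[
E_\lambda(u)\;\geq\;\lambda\langle u,\xi_\lambda'\rangle+\tfrac12\|f-u\|^2\;=\;\tfrac12\|u-u_\lambda\|^2+\tfrac12\big(\|f\|^2-\|f-\lambda\xi_\lambda'\|^2\big),
\]
where the last equality is an exact identity and, by (\ref{eq:TV}), the constant term equals $E_\lambda(u_\lambda)$. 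The strictly positive remainder $\tfrac12\|u-u_\lambda\|^2$ then identifies $u_\lambda$ as the \emph{unique} minimizer and pins the minimum at $\tfrac12(\|f\|^2-\|f-\lambda\xi_\lambda'\|^2)$, giving (\ref{eq:fundamental_id}). For the clause $\|\xi_\lambda\|_\infty=1$ when $u_\lambda\neq 0$, I note that $\int_I u_\lambda\,dx=0$ (both $f$ and $\xi_\lambda'$ integrate to zero), so $u_\lambda\neq 0$ is necessarily non-constant and thus $J(u_\lambda)>0$; feeding this into (\ref{eq:TVinequality}) gives $J(u_\lambda)=\langle u_\lambda,\xi_\lambda'\rangle\leq J(u_\lambda)\|\xi_\lambda\|_\infty$, whence $\|\xi_\lambda\|_\infty\geq 1$, forcing equality since $\xi_\lambda\in K$.

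Finally, for the converse, suppose $\bar u\in BV(I)$ and $\bar\xi\in K$ satisfy both $\bar u=f-\lambda\bar\xi'$ and $J(\bar u)=\langle\bar u,\bar\xi'\rangle$. Rerunning the estimate with $\bar\xi$ in the role of $\xi_\lambda$ yields $E_\lambda(u)\geq\tfrac12\|u-\bar u\|^2+E_\lambda(\bar u)$ for all $u\in BV(I)$ — the hypothesis $J(\bar u)=\langle\bar u,\bar\xi'\rangle$ being exactly what is needed to evaluate the constant term — so $\bar u$ minimizes $E_\lambda$ and uniqueness forces $\bar u=u_\lambda$. Then $\lambda\bar\xi'=f-\bar u=f-u_\lambda=\lambda\xi_\lambda'$ gives $\bar\xi'=\xi_\lambda'$; since $\bar\xi-\xi_\lambda\in H_0^1(I)$ has vanishing derivative and vanishing boundary values it must vanish identically, so $\bar\xi=\xi_\lambda$, completing the argument.
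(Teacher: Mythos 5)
Your proposal is correct and takes essentially the same route as the paper: the dual problem is solved by the projection theorem (you project $f$ onto $\{\lambda\xi' : \xi\in K\}$ in $L^2(I)$, the paper projects $\lambda^{-1}F$ onto $K$ in $H_0^1(I)$ --- the same problem under the isometry $\xi\mapsto\xi'$), the variational characterization of the projection yields (\ref{eq:TV}) together with $u_\lambda\in BV(I)$, and completing the square gives strong duality, uniqueness, and the converse. The only organizational difference is that you bypass the paper's preliminary weak-duality step (inf-sup $\geq$ sup-inf) by proving the lower bound $E_\lambda(u)\geq\tfrac{1}{2}\|u-u_\lambda\|^2+E_\lambda(u_\lambda)$ directly, which if anything makes the uniqueness claim and the converse (which the paper leaves as ``back-tracking'') more explicit.
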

This result is a special instance of the Fenchel-Rockafellar theorem, see e.g. \cite[p. 11]{Brezis-1999}. It is tailored with our specific needs in mind and will be proved with our bare hands using the projection theorem. The general version is used in Hinterm\"{u}ller and Kunisch~\cite{Hintermuller-Kunisch-2004} in their analysis of the multidimensional ROF model. Moreover, the equality (\ref{eq:fundamental_id}) has played an important role in the development of numerical algorithms for total variation minimization, both directly, as for instance in Zhu et al.~\cite{Zhu-etal-2007} or, indirectly, as in Chambolle~\cite{Chambolle-2004}.

Before the proof starts, let us remind the reader of the following general fact: If $M$ and $N$ are arbitrary non-empty sets and $\Phi:M\times N\to\mathbf{R}$ is any real valued function, then it is easy to check that
\begin{equation}\label{eq:infsup}
\inf_{x\in M}\sup_{y\in N}\Phi(x,y) \geq \sup_{y\in N}\inf_{x\in M}\Phi(x,y)\;,
\end{equation}
is always true. The use of $\inf$'s and $\sup$'s are important, as neither the greatest lower bounds nor the least upper bounds are necessarily attained.

\begin{proof}
	Since $E_\lambda(u) = \sup_{\xi\in K} \lambda\langle u,\xi'\rangle + \frac{1}{2}\|f-u\|^2$ it follows from (\ref{eq:infsup}) that
	\[
	\inf_{u\in BV(I)}E_\lambda(u) \geq \sup_{\xi\in K}\Big\{ \inf_{u\in BV(I)}\lambda \langle u,\xi'\rangle +\frac{1}{2}\|u-f\|^2 \Big\}\;.
	\]
	We first solve, for $\xi\in K$ fixed, the minimization problem on the right hand-side. Expanding $\|f-u\|^2$ and completing squares with respect to $u$ yields:
	\[
	\lambda \langle u,\xi'\rangle +\frac{1}{2}\|u-f\|^2 =
	\frac{1}{2}\Big\{ \| u-(f-\lambda\xi')\|^2 -\|f-\lambda\xi'\|^2+\|f\|^2 \Big\}
	\]
	The right hand-side is clearly minimized by the $L^2(I)$-function $u=f-\lambda\xi'$ and 
	\begin{equation}\label{eq:semi-id}
	\inf_{u\in BV(I)}E_\lambda(u) \geq \sup_{\xi\in K} \frac{1}{2}\Big\{ \|f\|^2-\|f-\lambda \xi'\|^2\Big\}
	\end{equation}
	holds. The maximization problem on the right hand side is equivalent to
	\begin{equation}\label{eq:closest_point}
	\inf_{\xi\in K} \|f-\lambda \xi'\| = \inf_{\xi\in K} \| F'-\lambda \xi'\| = \lambda \inf_{\xi\in K} \| \lambda^{-1}F -\xi\|_{H_0^1(I)}\;.
	\end{equation}
	By Proposition~\ref{thm:projection}, this problem has the unique solution $\xi_\lambda=P_K(\lambda^{-1}F)\in K$, so the supremum is attained in (\ref{eq:semi-id}). Now, let the function $u_\lambda$ be defined by  (\ref{eq:fundamental-rel}) in the theorem. A priori, $u_\lambda$ belongs to $L^2(I)$, but we are going to show that $u_\lambda\in BV(I)$: The characterization of $\xi_\lambda$ according in the projection theorem states that $
	\xi_\lambda\in K$ and  $\langle f-\lambda \xi_\lambda', \lambda\xi'-\lambda \xi_\lambda'\rangle\leq 0$ for all $\xi\in K$.
	If we use the definition of $u_\lambda$ and divide by $\lambda>0$ this characterization becomes
	\[
	\langle u_\lambda,\xi'\rangle \leq \langle u_\lambda,\xi_\lambda'\rangle\quad\text{for all $\xi\in K$,}
	\]
	where the right hand-side is finite. It follows from the definition of the total variation that $u_\lambda\in BV(I)$ with $J(u_\lambda)=\langle u_\lambda,\xi_\lambda'\rangle$, as asserted in the theorem. (This reasoning can be reversed; if (\ref{eq:TV}) is true then $\xi_\lambda$ is the minimizer in (\ref{eq:closest_point}).) Also, if $u_\lambda\neq 0$ then $\|\xi_\lambda\|_\infty <1$ is not consistent with the maximizing property (\ref{eq:TV}), hence $\|\xi_\lambda\|_\infty =1$, as claimed.
	
	It remains to be verified that $u_\lambda$ minimizes $E_\lambda$ and that equality holds in (\ref{eq:semi-id}). This follows from a direct calculation:
	\begin{align*}
	\inf_{u\in BV(I)} E_\lambda (u) &\geq \max_{\xi\in K} \frac{1}{2}\Big\{ \|f\|^2-\|f-\lambda \xi'\|^2\Big\} = \frac{1}{2}\|f\|^2 - \frac{1}{2}\|u_\lambda\|^2\\
	&= \frac{1}{2}\|f\|^2 +\frac{1}{2}\|u_\lambda\|^2 - \|u_\lambda\|^2 = \frac{1}{2}\|f\|^2 +\frac{1}{2}\|u_\lambda\|^2 -\langle u_\lambda,f-\lambda\xi_\lambda'\rangle\\
	&=\frac{1}{2}\|f-u_\lambda\|^2 + \langle u_\lambda, \lambda\xi_\lambda\rangle =\frac{1}{2}\|f-u_\lambda\|^2 + \lambda J(u_\lambda) = E_\lambda(u_\lambda)\;.
	\end{align*}
	So $\inf E_\lambda(u) = E_\lambda(u_\lambda)$, the infimum is attained, and equality holds in (\ref{eq:semi-id}). The inequality $E_\lambda(u)-E_\lambda(u_\lambda)\geq \frac{1}{2}\| u-u_\lambda \|^2$ implies the uniqueness of $u_\lambda$. The converse statement is proved by back-tracking the steps of the above proof.\qed
\end{proof}

Denoising is a non-expansive mapping:

\begin{corollary}\label{thm:non-expansiveness}
	If $f$ and $\tilde{f}$ are signals in $L^2(I)$ and the corresponding denoised signals are denoted $u_\lambda$ and $\tilde{u}_\lambda$, respectively, then $\| \tilde{u}_\lambda -u_\lambda \|_{L^2(I)} \leq \| \tilde{f} - f\|_{L^2(I)}$.
\end{corollary}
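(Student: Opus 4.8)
The plan is to extract, from the characterization in Theorem~\ref{thm:main}, a monotonicity relation between the two denoised signals and their dual variables, and then to close the argument with a single application of Cauchy--Schwarz. Let $\xi_\lambda,\tilde\xi_\lambda\in K$ be the maximizers associated with $f$ and $\tilde f$, so that $u_\lambda = f-\lambda\xi_\lambda'$ and $\tilde u_\lambda=\tilde f-\lambda\tilde\xi_\lambda'$ by (\ref{eq:fundamental-rel}), while $J(u_\lambda)=\langle u_\lambda,\xi_\lambda'\rangle$ and $J(\tilde u_\lambda)=\langle\tilde u_\lambda,\tilde\xi_\lambda'\rangle$ by (\ref{eq:TV}). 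Since $\xi_\lambda,\tilde\xi_\lambda\in K$, the representation $J(u)=\sup_{\xi\in K}\langle u,\xi'\rangle$ from Lemma~\ref{thm:newTVdef} also supplies the two cross bounds $\langle u_\lambda,\tilde\xi_\lambda'\rangle\le J(u_\lambda)$ and $\langle\tilde u_\lambda,\xi_\lambda'\rangle\le J(\tilde u_\lambda)$.

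The key step is to combine these four relations. Expanding $\langle u_\lambda-\tilde u_\lambda,\ \xi_\lambda'-\tilde\xi_\lambda'\rangle$ and inserting the two equalities and two inequalities above gives
\[
\langle u_\lambda-\tilde u_\lambda,\ \xi_\lambda'-\tilde\xi_\lambda'\rangle
= \big(J(u_\lambda)-\langle u_\lambda,\tilde\xi_\lambda'\rangle\big)+\big(J(\tilde u_\lambda)-\langle\tilde u_\lambda,\xi_\lambda'\rangle\big)\ \ge\ 0 .
\]
This monotonicity is the heart of the matter; everything after it is bookkeeping. Using (\ref{eq:fundamental-rel}) to write $\lambda(\xi_\lambda'-\tilde\xi_\lambda')=(f-\tilde f)-(u_\lambda-\tilde u_\lambda)$ and dividing by $\lambda>0$, the display becomes
\[
\langle u_\lambda-\tilde u_\lambda,\ (f-\tilde f)-(u_\lambda-\tilde u_\lambda)\rangle\ \ge\ 0,
\quad\text{i.e.}\quad
\|u_\lambda-\tilde u_\lambda\|^2 \le \langle u_\lambda-\tilde u_\lambda,\ f-\tilde f\rangle .
\]
Cauchy--Schwarz then yields $\|u_\lambda-\tilde u_\lambda\|^2\le\|u_\lambda-\tilde u_\lambda\|\,\|f-\tilde f\|$, and dividing by $\|u_\lambda-\tilde u_\lambda\|$ (the case $u_\lambda=\tilde u_\lambda$ being trivial) gives the assertion.

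I expect no serious obstacle: the only points needing care are the sign bookkeeping in the four-term expansion and the recognition that the cross inequalities are nothing but the supremum definition of $J$ in Lemma~\ref{thm:newTVdef}. As an alternative route, one may observe that the proof of Theorem~\ref{thm:main} identifies the denoising map as $u_\lambda=\lambda\big[(I-P_K)(\lambda^{-1}F)\big]'$, where $F$ is the cumulative signal; since the metric projection $P_K$ onto a closed convex set is firmly non-expansive, so is $I-P_K$, and the identity $\|v'\|_{L^2(I)}=\|v\|_{H_0^1(I)}$ then delivers the same estimate via $\|F-\tilde F\|_{H_0^1(I)}=\|f-\tilde f\|_{L^2(I)}$.
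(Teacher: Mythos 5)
Your proof is correct and is exactly the argument the paper has in mind: the paper itself omits the proof, remarking only that the corollary ``is easily verified by the reader using the characterization of the ROF-minimizer given in the theorem,'' and your four-term monotonicity expansion $\langle u_\lambda-\tilde u_\lambda,\ \xi_\lambda'-\tilde\xi_\lambda'\rangle\geq 0$ followed by Cauchy--Schwarz is precisely that verification (indeed it proves the stronger, firm non-expansiveness inequality $\|u_\lambda-\tilde u_\lambda\|^2\leq\langle u_\lambda-\tilde u_\lambda, f-\tilde f\rangle$). Your alternative route via firm non-expansiveness of $I-P_K$ in $H_0^1(I)$ is also sound, with the small caveat that it invokes the paper's standing mean-zero normalization (so that $F,\tilde F\in H_0^1(I)$), whereas your main argument needs no such reduction.
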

Like Theorem~\ref{thm:main} this a special instance of a more general result about Moreau-Yosida approximation (or of the proximal map), see \cite[Theorem 17.2.1]{Attouch-etal-2015}. However, the result is easily verified by the reader using the characterization of the ROF-minimzer given in the theorem.

The equivalence of the two denoising models can now be established:

\begin{proof}[of Theorem~\ref{thm:equivalence}]
It follows from Theorem~\ref{thm:main} that the minimizer $u_\lambda$ of the ROF functional is given by $u_\lambda = f- \lambda \xi_\lambda'$ where $\xi_\lambda$ is the unique solution of
\begin{equation}\label{eq:closest_to_f}
\min_{\xi\in K}\frac{1}{2}\| f-\lambda\xi'\|_{L^2(I)}^2\;.
\end{equation}
If we introduce the new variable $W:=F-\lambda \xi$, where $F\in H_0^1(I)$ is the cumulative signal, then $W\in H_0^1(I)$ and the condition $\|\xi\|_\infty\leq 1$ implies that $W$ satisfies $F(x)-\lambda\leq W(x)\leq F(x)+\lambda$ on $I$. Therefore (\ref{eq:closest_to_f}) is equivalent to $\min_{W\in T_\lambda}(1/2)\| W'\|_{L^2(I)}^2$,
which is the minimization problem in step 3 of the Taut string algorithm whose solution we denoted $W_\lambda$. It follows that $W_\lambda=F-\lambda\xi_\lambda$ and differentiation yields $f_\lambda = W_\lambda'= f-\lambda \xi_\lambda' = u_\lambda$,
the desired result.\qed
\end{proof}

It is interesting to note that Theorem~\ref{thm:main} associates a \emph{unique} test function $\xi_\lambda\in K$ with the solution $u_\lambda$ of the ROF model such that $J(u_\lambda)=\langle u_\lambda, \xi_\lambda'\rangle_{L^2}$, in particular if we compare to the situation in Example~\ref{ex:first}. A concrete case looks as follows:

\begin{example}
	Let $f(x)=\operatorname{sign}(x)$ be the step function defined on $I=(-1,1)$. An easy calculation, based on the Taut string interpretation, shows that if $0<\lambda<1$ then $u_\lambda=(1-\lambda)\operatorname{sign}(x)$ and $\xi_\lambda=|x|-1\in H_0^1(I)$. Notice that $\xi_\lambda$ is not in $C_0^1(I)$, so the extension of the space of test functions from $C_0^1$ to $H_0^1$ is essential to our theory. For $\lambda \geq 1$ we find $u_\lambda=0$ and $\xi_\lambda = \lambda^{-1}(|x|-1)$. Notice that $\|\xi_\lambda\|_\infty = 1$ when $u_\lambda\neq 0$.
\end{example}

Our proof of Theorem~\ref{thm:equivalence} is essentially a change of variables, and as such, is almost a `derivation' of the taut string interpretation. We also get the existence and uniqueness of solutions to both models in one stroke. By contrast, Grassmair's proof \cite{Grassmair-2007} shows that $u_\lambda$ and $W_\lambda'$ satisfy the same set of three necessary conditions, and that these conditions admit at most one solution. The point is driven home by establishing existence separately for both models. The argument assumes $f\in L^\infty$ and involves a fair amount of measure theoretic considerations. The proof of equivalence given in Scherzer et al.~\cite{Scherzer-etal-2009} is based on a thorough functional analytic study of Meyer's G-norm and is not elementary.

\section{Applications of the Taut String Interpretation}

We now show how some known, and some new, properties of the ROF model can be understood in the light of its equivalence to the Taut string algorithm. 

The Taut string algorithm suggests that $W_\lambda=0$, and therefore $u_\lambda=0$, when $\lambda$ is sufficiently large, and that $W_\lambda$ must touch the sides $F\pm\lambda$ of the tube $T_\lambda$ when $\lambda$ is small. These assertions can be made precise:

\begin{proposition}\label{thm:properties}
	(a) The denoised signal $u_\lambda = 0$ if and only if $\lambda \geq \|F\|_\infty$, and \\
	(b)  if $0 < \lambda < \|F\|_\infty$ then $\| F - W_\lambda\|_\infty = \lambda$.\\
	(c) $\|W_\lambda \|_\infty = \max(0, \|F\|_\infty - \lambda)$. 
\end{proposition}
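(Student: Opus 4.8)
The plan is to prove all three parts directly from the Taut String interpretation of Theorem~\ref{thm:equivalence}, using that $W_\lambda$ is the unique minimizer of the Dirichlet energy $E(W)=\tfrac12\int_a^b W'(x)^2\,dx$ over the tube $T_\lambda$ and that $u_\lambda=W_\lambda'$. Since $W_\lambda\in H_0^1(I)\subset C(\bar I)$, the membership $W_\lambda\in T_\lambda$ is the pointwise statement $|W_\lambda(x)-F(x)|\le\lambda$ for all $x\in\bar I$. The uniform mechanism in all three parts is to exhibit a feasible competitor $\tilde W\in T_\lambda$ with $E(\tilde W)\le E(W_\lambda)$, the inequality being strict unless $W_\lambda$ already enjoys the asserted property; minimality then forces the conclusion.

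For part (a) and (b), the competitors are simple. In (a), I would first note that $W\equiv 0$ is the global minimizer of $E$ over all of $H_0^1(I)$, since $E\ge 0$ and any minimizer is constant, hence zero by $W(a)=0$. Thus if $\lambda\ge\|F\|_\infty$, i.e. $0\in T_\lambda$, then $W_\lambda=0$ and $u_\lambda=W_\lambda'=0$; conversely $u_\lambda=0$ gives $W_\lambda'=0$, so $W_\lambda\equiv 0$ by the boundary condition, and feasibility of $0$ yields $\|F\|_\infty\le\lambda$. For (b), assume $0<\lambda<\|F\|_\infty$, so by (a) we have $u_\lambda\neq 0$, whence $W_\lambda$ is non-constant with $E(W_\lambda)>0$ and $\|W_\lambda\|_\infty>0$. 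If $\|F-W_\lambda\|_\infty\le\lambda-\varepsilon$ for some $\varepsilon>0$, I would test with the shrunk curve $W_t=(1-t)W_\lambda$: the triangle inequality gives $\|F-W_t\|_\infty\le(\lambda-\varepsilon)+t\|W_\lambda\|_\infty$, so $W_t\in T_\lambda$ once $t\le\varepsilon/\|W_\lambda\|_\infty$, while $E(W_t)=(1-t)^2E(W_\lambda)<E(W_\lambda)$ for $t\in(0,1)$, contradicting minimality. Hence $\|F-W_\lambda\|_\infty=\lambda$.

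For part (c), the lower bound $\|W_\lambda\|_\infty\ge\max(0,\|F\|_\infty-\lambda)$ is immediate: picking $x^\ast$ with $|F(x^\ast)|=\|F\|_\infty$, the tube constraint gives $|W_\lambda(x^\ast)|\ge|F(x^\ast)|-\lambda=\|F\|_\infty-\lambda$. For the matching upper bound I would use a truncation competitor. Setting $c=\max(0,\max_{\bar I}F-\lambda)\ge 0$ and $\tilde W=\min(W_\lambda,c)$, one checks $\tilde W\in T_\lambda$: the lower obstacle holds because $c+\lambda\ge\max_{\bar I}F\ge F$ wherever $\tilde W=c$, the upper obstacle holds because on $\{W_\lambda\ge c\}$ one already has $F+\lambda\ge W_\lambda\ge c=\tilde W$, and the boundary values are preserved since $c\ge 0$. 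Truncation only flattens $W_\lambda$ where it exceeds $c$, so $E(\tilde W)\le E(W_\lambda)$, strictly if $\{W_\lambda>c\}$ has positive measure; minimality thus forces $\max_{\bar I}W_\lambda\le c$. Applying the identical argument to $-W_\lambda$ (the taut string for $-F$) bounds $-\min_{\bar I}W_\lambda$, and combining the two gives $\|W_\lambda\|_\infty\le\max(0,\|F\|_\infty-\lambda)$, matching the lower bound.

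The main obstacle is the upper bound in (c): one must select the truncation level $c$ correctly and verify carefully that $\min(W_\lambda,c)$ violates neither side of the tube. The lower-obstacle check and the boundary values are easy, but the upper-obstacle check is the delicate one, and it succeeds precisely because $W_\lambda$ already satisfies $W_\lambda\le F+\lambda$, which propagates to the truncated function on the set where the truncation is active. Parts (a) and (b) are comparatively routine once the taut-string variational picture is in place.
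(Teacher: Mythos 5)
Your proof is correct, and in parts (a) and (b) it takes a genuinely different route from the paper's. The paper deduces (a) from the projection characterization $\xi_\lambda=P_K(\lambda^{-1}F)$ established in the proof of Theorem~\ref{thm:main} (so $u_\lambda=0$ iff $\lambda^{-1}F\in K$), and (b) from the dichotomy in Theorem~\ref{thm:main} that $u_\lambda\neq 0$ forces $\|\xi_\lambda\|_\infty=1$, combined with the identity $\lambda\xi_\lambda=F-W_\lambda$. Your arguments instead stay entirely inside the taut-string problem and rely on competitor constructions: the zero string in (a), and the shrunk string $(1-t)W_\lambda$ in (b). What this buys is self-containedness: you need only Theorem~\ref{thm:equivalence} and the existence/uniqueness of $W_\lambda$ from Proposition~\ref{thm:projection}, never the dual pair $(u_\lambda,\xi_\lambda)$; the cost is a slightly longer argument than the paper's one-line deductions, and both of your competitors respect the boundary conditions only because of the standing normalization $F(a)=F(b)=0$, which deserves explicit mention. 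Part (c) is essentially the paper's truncation argument: the paper truncates at $c=\|F\|_\infty-\lambda$ and handles $\lambda\geq\|F\|_\infty$ separately via (a), whereas your one-sided level $c=\max(0,\max_{\bar I}F-\lambda)$ treats both regimes at once and, combined with the reflected argument for $-W_\lambda$, recovers the same bound; your lower bound via feasibility at a maximizing point is if anything cleaner than the paper's appeal to (b). One small point of rigor in (c): your claim that the energy decreases \emph{strictly} whenever $\{W_\lambda>c\}$ has positive measure does not follow merely from ``truncation only flattens''---flattening a region where $W_\lambda'$ vanishes a.e.\ would not decrease the energy---and needs the extra observation that on each connected component of the open set $\{W_\lambda>c\}$ the function $W_\lambda$ equals $c$ at both endpoints, so its derivative cannot vanish a.e.\ there. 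Alternatively, you can bypass strictness entirely, as the paper does: $\tilde W\in T_\lambda$ together with $E(\tilde W)\leq E(W_\lambda)$ already forces $\tilde W=W_\lambda$ by \emph{uniqueness} of the minimizer, hence $\max_{\bar I}W_\lambda\leq c$.
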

The results (a) and (b) are well-known and proofs, valid in the multi-dimensional case, can be found in Meyer's treatise~\cite{Meyer-2000}. Notice that the maximum norm $\|F\|_\infty$ of the cumulative signal $F$ coincides, in one dimension, with the Meyer's G-norm $\|f\|_{\ast}$ of the signal $f$.
Theorem~\ref{thm:main} and the taut string interpretation of the ROF model allow us to give very short and direct proofs of all three properties.

\begin{proof}
	(a) By Theorem~\ref{thm:equivalence}, the denoised signal $u_\lambda$ is zero if and only if the taut string $W_\lambda$ is zero. We know that $W_\lambda=F-\lambda\xi_\lambda$ where, as seen from  (\ref{eq:closest_point}), $\xi_\lambda$ is the projection in $H_0^1(I)$ of $\lambda^{-1}F$ onto the closed convex set $K$. Therefore $u_\lambda = 0$ if and only if $\lambda^{-1}F\in K$, that is, if and only if $\| F\|_\infty \leq \lambda$, as claimed.
	
	(b) If $0 < \lambda < \|F\|_\infty$ then $u_\lambda\neq 0$ hence $\| \xi_\lambda\|_\infty = 1$, by Theorem~\ref{thm:main}. The assertion now follows by taking norms in the identity $\lambda \xi_\lambda = F-W_\lambda$.
	
	(c) The equality clearly holds when $\lambda\geq \|F\|_\infty$ because $W_\lambda=0$ by (a). When $c:=\|F\|_\infty-\lambda >0$ we use a truncation argument:  If $W\in T_\lambda$ then so does $\hat{W}:=\min(c,W)$, in particular $c>0$ ensures that $\hat{W}(a)=\hat{W}(b)=0$. Since $E(\hat{W})\leq E(W)$, and $W_\lambda$ is the (unique) minimizer of $E$ over $T_\lambda$, we conclude that $\max_I W_\lambda \leq c$. A similar argument gives $-\min_I W_\lambda\leq c$. Thus $\|W_\lambda \|_\infty \leq \max(0, \|F\|_\infty - \lambda)$. The reverse inequality follows from (b). \qed
\end{proof}

Now define, for $\lambda > 0$, the \emph{value function}
\[
e(\lambda) := \inf_{u\in BV(I)} E_\lambda(u),
\]
that is, $e(\lambda) = E_\lambda(u_\lambda)$. The next two theorems contains essentially well-known results.

\begin{proposition}\label{thm:value-function}
	The function $e:(0,+\infty)\to (0,+\infty)$ is nondecreasing and concave, hence continuous,  and satisfies 
	\[
	e(\lambda)=\|f\|^2/2 \quad \text{for $\lambda \geq \|F\|_\infty$ and}\quad \lim_{\lambda\to 0+} e(\lambda)=0. 
	\]
	In particular, if $f\in BV(I)$ then $e(\lambda)=O(\lambda)$ as $\lambda\to 0+$.
\end{proposition}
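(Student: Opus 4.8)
The plan is to work from the primal representation $e(\lambda)=\inf_{u\in BV(I)}E_\lambda(u)$ rather than from the dual formula of Theorem~\ref{thm:main} (which would present $e$ as a \emph{supremum}, inconvenient for concavity). The decisive observation is that, for each \emph{fixed} $u\in BV(I)$, the map $\lambda\mapsto E_\lambda(u)=J(u)\,\lambda+\tfrac12\|f-u\|^2$ is an \emph{affine} function of $\lambda$ with nonnegative slope $J(u)\ge 0$. Two properties follow at once. First, $e$ is \emph{concave}, being a pointwise infimum of affine (hence concave) functions of $\lambda$. Second, $e$ is \emph{nondecreasing}: if $\lambda_1<\lambda_2$ and $u_{\lambda_2}$ is the minimizer for $\lambda_2$ (which exists by Theorem~\ref{thm:main}), then $e(\lambda_1)\le E_{\lambda_1}(u_{\lambda_2})\le E_{\lambda_2}(u_{\lambda_2})=e(\lambda_2)$, the middle inequality using $J(u_{\lambda_2})\ge 0$. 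Continuity on the open interval $(0,+\infty)$ is then automatic, since a finite-valued concave function on an open interval is continuous; here $e$ is finite because $0\le e(\lambda)\le E_\lambda(u)<\infty$ for any smooth competitor $u$.

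For the explicit value on $[\|F\|_\infty,+\infty)$ I would invoke Proposition~\ref{thm:properties}(a): for $\lambda\ge\|F\|_\infty$ the minimizer is $u_\lambda=0$, whence $e(\lambda)=E_\lambda(0)=J(0)\,\lambda+\tfrac12\|f\|^2=\tfrac12\|f\|^2$, using $J(0)=0$. For the limit as $\lambda\to 0+$, note first that $E_\lambda\ge 0$ forces $e(\lambda)\ge 0$, and since $e$ is nondecreasing the limit exists and equals $\inf_{\lambda>0}e(\lambda)\ge 0$. To obtain the matching upper bound I would use density of smooth functions in $L^2(I)$: given $\varepsilon>0$, pick $u_\varepsilon\in BV(I)$ with $\|f-u_\varepsilon\|^2<\varepsilon$; then $e(\lambda)\le E_\lambda(u_\varepsilon)=\lambda J(u_\varepsilon)+\tfrac12\|f-u_\varepsilon\|^2$, so $\limsup_{\lambda\to 0+}e(\lambda)\le\tfrac12\varepsilon$, and letting $\varepsilon\to 0$ yields $\lim_{\lambda\to 0+}e(\lambda)=0$.

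Finally, the rate $e(\lambda)=O(\lambda)$ when $f\in BV(I)$ is the easiest part: test the infimum with the competitor $u=f$, now admissible precisely because $f\in BV(I)$. This gives $0\le e(\lambda)\le E_\lambda(f)=\lambda J(f)+\tfrac12\|f-f\|^2=\lambda J(f)$, and since $J(f)<\infty$ we conclude $e(\lambda)=O(\lambda)$. The only step requiring genuine care is the vanishing of the limit at $0$: it is the sole place where one must appeal to an approximation/density argument rather than to the affine structure of $E_\lambda$ in $\lambda$. Everything else reduces to the elementary fact that an infimum of affine functions is concave, together with direct substitutions into the functional.
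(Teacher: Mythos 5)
Your proposal is correct and follows essentially the same route as the paper's proof: concavity from $e$ being an infimum of affine functions of $\lambda$, monotonicity from comparing $E_{\lambda_1}\le E_{\lambda_2}$ pointwise, the explicit value via $u_\lambda=0$ for $\lambda\ge\|F\|_\infty$, the $O(\lambda)$ bound by testing with $u=f$, and the vanishing limit by approximating $f$ in $L^2$ by a $BV$ (in the paper, $H_0^1$) function. The only cosmetic differences are that you prove monotonicity via the minimizer $u_{\lambda_2}$ instead of taking the infimum of the pointwise inequality, and you note explicitly that monotonicity guarantees the limit at $0+$ exists; both are harmless variants of the paper's argument.
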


\begin{proof}
	If $\lambda_2\geq \lambda_1 > 0$ then the inequality $E_{\lambda_2}(u) \geq E_{\lambda_1}(u)$ holds trivially for all $u$. Taking infimum over the functions in $BV(I)$ yields $e(\lambda_2) \geq e(\lambda_1)$, so $e$ is nondecreasing. 
	
	For any $u$ the right hand side of the inequality
	\[
	e(\lambda) \leq E_\lambda(u) = \lambda J(u) + \frac{1}{2}\|u-f\|^2\;,
	\]
	is an affine, and therefore a concave, function of $\lambda$. Because the infimum of any family of concave functions is again concave, it follows that $e(\lambda) =  \inf_{u\in BV(I)} E_\lambda(u)$ is concave. 
	
	For $\lambda \geq \|F\|_\infty$ we know from the previous theorem that $u_\lambda = 0$, so $e(\lambda)=E_\lambda(0)=\|f\|^2/2$. 
	
	To prove the assertion about $e(\lambda)$ as $\lambda$ tends to zero from the right, we first assume that $f\in BV(I)$, in which case it follows that $0 < e(\lambda) \leq E_\lambda(f) =\lambda J(f)$, so $e(\lambda)=O(\lambda)$ because $J(f)<\infty$.
	
	If we merely have $f\in L^2(I)$ an approximation argument is needed: For any $\epsilon > 0$ take a function $f_\epsilon\in H_0^1(I)$ such that $\| f-f_\epsilon\|^2/2 < \epsilon$. Then $f_\epsilon\in BV(I)$ and
	$0\leq e(\lambda) \leq E_\lambda(f_\epsilon) < \lambda J(f_\epsilon) + \epsilon.$ It follows that $0 \leq \operatorname*{lim\,sup}_{\lambda\to 0+} e(\lambda) < \epsilon$. Since $\epsilon$ is arbitrary, we get $\lim_{\lambda\to 0+}e(\lambda)=0$.\qed
\end{proof}

The first part of next the proposition is a special instance of a much more general result, see Attouch et al.~\cite[Theorem 17.2.1]{Attouch-etal-2015}. The second part contains a quantification of the rate of convergence which is not easily located in the literature.

\begin{proposition}\label{thm:L2-continuity}
	For any $f\in L^2(I)$ we have $u_\lambda\to f$ in $L^2$ as $\lambda\to 0+$. Moreover, if $f\in BV(I)$ then $\|u_\lambda-f\|_{L^2(I)}=o(\lambda^{1/2})$ and $J(u_\lambda)\to J(f)$ as $\lambda\to 0+$.
\end{proposition}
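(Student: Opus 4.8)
The plan is to read all three assertions off the value function $e(\lambda)=\inf_u E_\lambda(u)$ studied in Proposition~\ref{thm:value-function}, combined with the defining identity $e(\lambda)=E_\lambda(u_\lambda)=\lambda J(u_\lambda)+\tfrac12\|f-u_\lambda\|^2$. The $L^2$-convergence for a general $f\in L^2(I)$ is then essentially free: since $\tfrac12\|f-u_\lambda\|^2\le E_\lambda(u_\lambda)=e(\lambda)$ and $e(\lambda)\to 0$ as $\lambda\to 0+$ by Proposition~\ref{thm:value-function}, one gets $\|f-u_\lambda\|^2\le 2e(\lambda)\to 0$, hence $u_\lambda\to f$ in $L^2$.

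For the refined rate when $f\in BV(I)$, I would first note that the crude estimate $e(\lambda)\le E_\lambda(f)=\lambda J(f)$ only yields $\|f-u_\lambda\|=O(\lambda^{1/2})$, so the concavity of $e$ must be exploited more carefully to reach $o(\lambda^{1/2})$. The key is to rewrite the identity as
\[
\frac{\|f-u_\lambda\|^2}{2\lambda}=\frac{e(\lambda)}{\lambda}-J(u_\lambda),
\]
which is nonnegative. Since $u_\lambda$ is also admissible for the problem at parameter $2\lambda$, the inequality $E_{2\lambda}(u_\lambda)\ge e(2\lambda)$ gives the supergradient-type bound $J(u_\lambda)\ge\big(e(2\lambda)-e(\lambda)\big)/\lambda$, so the right-hand side above is dominated by $2\big[e(\lambda)/\lambda-e(2\lambda)/(2\lambda)\big]$. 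Because $e$ is concave with $e(0+)=0$, the secant slope $\lambda\mapsto e(\lambda)/\lambda$ is nonincreasing and, being bounded by $J(f)$, converges to a finite limit $L$ as $\lambda\to 0+$; consequently both $e(\lambda)/\lambda$ and $e(2\lambda)/(2\lambda)$ tend to $L$ and their difference vanishes. Squeezing $\|f-u_\lambda\|^2/(2\lambda)$ between $0$ and this vanishing bound gives $\|f-u_\lambda\|^2/\lambda\to 0$, which is precisely $\|f-u_\lambda\|_{L^2(I)}=o(\lambda^{1/2})$.

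For $J(u_\lambda)\to J(f)$ I would close the argument from both sides. The upper bound is immediate: $\lambda J(u_\lambda)\le e(\lambda)\le\lambda J(f)$ forces $\limsup_{\lambda\to 0+}J(u_\lambda)\le J(f)$. The matching lower bound comes from lower semicontinuity of $J$: since $J$ is the supremum of the $L^1$-continuous functionals $u\mapsto\int_a^b u\,\xi'\,dx$ over $\xi\in C_0^1(I)$ with $\|\xi\|_\infty\le1$, it is lower semicontinuous with respect to $L^1$-convergence, and $u_\lambda\to f$ in $L^2(I)\subset L^1(I)$ (from the first part) yields $J(f)\le\liminf_{\lambda\to 0+}J(u_\lambda)$. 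Combining the two inequalities gives the claimed limit.

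I expect the main obstacle to be the $o(\lambda^{1/2})$ refinement: passing from the easy $O(\lambda^{1/2})$ bound to little-$o$ is exactly where the concavity-and-secant argument for $e$ has to do real work, whereas the $L^2$-convergence and the total-variation limit follow by soft arguments.
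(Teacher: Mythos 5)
Your proof is correct, and it splits naturally into parts that mirror the paper and one part that does not. The first claim ($u_\lambda\to f$ in $L^2$ via $\tfrac12\|f-u_\lambda\|^2\le e(\lambda)\to 0$) and the limit $J(u_\lambda)\to J(f)$ (upper bound $J(u_\lambda)\le J(f)$ plus lower semicontinuity of $J$ under $L^1$-convergence) are exactly the paper's arguments. Where you genuinely diverge is the rate $\|f-u_\lambda\|_{L^2(I)}=o(\lambda^{1/2})$: the paper first derives the inequality $\|u_\lambda-f\|^2\le 2\lambda\bigl(J(f)-J(u_\lambda)\bigr)$ from $e(\lambda)\le E_\lambda(f)$ and then feeds the already-established limit $J(u_\lambda)\to J(f)$ back into it, so there the little-$o$ rate is a \emph{corollary} of the total-variation convergence, hence of lower semicontinuity. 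You instead bypass LSC in this step entirely: from $e(2\lambda)\le E_{2\lambda}(u_\lambda)=\lambda J(u_\lambda)+e(\lambda)$ you get $J(u_\lambda)\ge\bigl(e(2\lambda)-e(\lambda)\bigr)/\lambda$, whence
\[
\frac{\|f-u_\lambda\|^2}{2\lambda}=\frac{e(\lambda)}{\lambda}-J(u_\lambda)\le 2\Bigl[\frac{e(\lambda)}{\lambda}-\frac{e(2\lambda)}{2\lambda}\Bigr],
\]
and the right-hand side vanishes because concavity of $e$ with $e(0+)=0$ (Proposition~\ref{thm:value-function}) makes the secant slope $\lambda\mapsto e(\lambda)/\lambda$ monotone and bounded by $J(f)$, hence convergent; all steps check out. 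What each approach buys: the paper's route is shorter given that $J(u_\lambda)\to J(f)$ is wanted anyway, and its intermediate inequality is reused later (it is the starting point of the $O(\lambda)$ result for piecewise constant signals); your route isolates the rate as a property of the value function alone---essentially the existence of a finite one-sided derivative of $e$ at $0+$---and shows, as a bonus, that $\|f-u_\lambda\|_{L^2(I)}=o(\lambda^{1/2})$ holds for any $f\in L^2(I)$ with $\sup_{\lambda>0}e(\lambda)/\lambda<\infty$, a condition which (by the LSC argument you give for the third claim) is in fact equivalent to $f\in BV(I)$.
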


\begin{proof}
	The obvious inequality $\| f-u_\lambda\|^2/2\leq e(\lambda)$ and the fact $\lim_{\lambda\to 0+}e(\lambda)=0$, proved above, implies the first assertion. When $f\in BV(I)$ it follows from the inequality $\lambda J(u_\lambda) + \frac{1}{2}\|u_\lambda-f\|_{L^2(I)}^2 = e(\lambda)\leq E_\lambda(f) =\lambda J(f)$ that
	\begin{equation}\label{eq:an_inequality}
	\|u_\lambda-f\|_{L^2(I)}^2 \leq 2\lambda( J(f)-J(u_\lambda))\;.
	\end{equation}
	Consequently $\|u_\lambda-f\|^2_{L^2(I)} = O(\lambda)$ and $J(u_\lambda)\leq J(f)$ for all $\lambda>0$. But we can do slightly better than that. Since $u_\lambda\to f$ in $L^2$ as $\lambda\to 0+$, we get $J(f)\leq \operatorname*{lim\,inf}_{\lambda\to 0+} J(u_\lambda)$, by the lower semi-continuity of the total variation $J$, cf. \cite{Ambrosio-etal-2000}. Since $J(u_\lambda)\leq J(f)$ we also obtain an estimate from below: $\operatorname*{lim\,sup}_{\lambda\to 0+} J(u_\lambda) \leq J(f)$. We conclude that $\lim_{\lambda\to 0+}J(u_\lambda) = J(f)$. If this is used in (\ref{eq:an_inequality}) we find that $\|u-f\|^2_{L^2(I)} = o(\lambda)\text{ as $\lambda\to 0+$}$.\qed
\end{proof}

\section{Proof and Applications of the Fundamental Estimate}\label{sec:fundamental-estimate}

We begin with the proof of our estimate on the derivative of the denoised signal:

\begin{proof}[of Theorem~\ref{thm:L-S-consequence}]
	This estimate is a consequence of the extension to bilateral obstacle problems of the original Lewy-Stampacchia inequality~\cite{Lewy-Stampacchia-1970}. The bilateral obstacle problem, in the one-dimensional setting, is to minimize the energy $E(u):=\frac{1}{2}\int_a^b u'(x)^2\,dx$ in (\ref{eq:energy}) over the closed convex set $C=\{ u\in H_0^1(I): \phi(x) \leq u(x)\leq \psi(x) \text{ a.e. $I$}\}$. The obstacles are functions $\phi,\psi\in H^1(I)$ which satisfy the conditions $\phi <\psi$ on $I$, and $\phi < 0 <\psi$ on $\partial I =\{ a,b\}$. This ensures that $C$ is nonempty.
	
	Suppose $\phi'$ and $\psi'$ are in $BV(I)$, such that $\phi''$ and $\psi''$ are signed measures, then the solution $u_0$ of $\min_{u\in C} E(u)$ satisfies the following inequality (as measures)
	\begin{equation}\label{eq:LS-ineq}
		-(\phi'')^- \leq u_0'' \leq (\psi'')^+\, .
	\end{equation}
	Here the notation $\mu^+$ and $\mu^-$ is used to denote the positive and negative variation, respectively, of a signed measure $\mu$. This is the generalization of the Lewy-Stampacchia inequality. An abstract proof, valid in a much more general setting, can be found in Gigli and Mosconi~\cite{Gigli-Mosconi-2015}. The assumption of our theorem, that $f\in BV(I)$, implies that $F''=f'$ is a signed measure. If we apply (\ref{eq:LS-ineq}) with $\phi=F-\lambda$ and $\psi = F+\lambda$ then we find that the taut string $W_\lambda$ satisfies
	\[
	-(F'')^- = -((F-\lambda)'')^- \leq W_\lambda'' \leq ((F+\lambda)'')^+ = (F'')^+ \, .
	\]
	The fundamental estimate (\ref{eq:LS-consequence}) follows if we substitute the identities $F'=f$ and $W_\lambda'=u_\lambda$ into the above inequality.\qed
\end{proof}

Having established Theorem~\ref{thm:L-S-consequence} we are able to prove the following result about the strong convergence in $BV(I)$ of the ROF-minimizer as the regularization weight approaches zero.

\begin{proposition}\label{thm:strong-convergence}
	If $f\in BV(I)$ then 
	\[
	J(f-u_\lambda)=J(f)-J(u_\lambda).
	\]
	In particular, both $J(f-u_\lambda)$ and $\| f-u_\lambda\|_{BV}$ tend to zero as $\lambda\to 0+$.
\end{proposition}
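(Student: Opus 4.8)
The plan is to interpret the claimed identity at the level of the signed measures $f'$ and $u_\lambda'$ and to read it off from the fundamental estimate of Theorem~\ref{thm:L-S-consequence}. Recall that for $u\in BV(I)$ one has $J(u)=|u'|(I)=(u')^+(I)+(u')^-(I)$, the total mass of the variation measure of $u'$. Subadditivity of $J$ gives the trivial half $J(f-u_\lambda)\ge J(f)-J(u_\lambda)$ for free, so the content of the statement is the reverse inequality, and this is exactly where the estimate $-(f')^-\le u_\lambda'\le (f')^+$ will be used.

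First I would fix a Hahn decomposition $I=P\cup N$ for $f'$, so that $(f')^+$ is concentrated on $P$ and $(f')^-$ on $N$. Restricting the two-sided bound of Theorem~\ref{thm:L-S-consequence} to each piece, it reads $0\le u_\lambda'\le (f')^+$ on $P$ (where $(f')^-$ vanishes) and $-(f')^-\le u_\lambda'\le 0$ on $N$ (where $(f')^+$ vanishes). Hence $u_\lambda'$ is a nonnegative measure on $P$ and a nonpositive measure on $N$, so $P,N$ is a Hahn decomposition for $u_\lambda'$ as well; the same two bounds show that $f'-u_\lambda'$ is likewise nonnegative on $P$ and nonpositive on $N$. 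Thus all three measures share the \emph{same} positive and negative sets.

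Once this is established, the identity is pure bookkeeping with the Jordan decomposition. On $P$ one computes $|f'-u_\lambda'|(P)=(f')^+(I)-u_\lambda'(P)=|f'|(P)-|u_\lambda'|(P)$, and the analogous equality holds on $N$; adding the two gives $J(f-u_\lambda)=|f'-u_\lambda'|(I)=|f'|(I)-|u_\lambda'|(I)=J(f)-J(u_\lambda)$. For the limit assertions I would invoke Proposition~\ref{thm:L2-continuity}: it yields $J(u_\lambda)\to J(f)$, whence $J(f-u_\lambda)=J(f)-J(u_\lambda)\to 0$, and it yields $u_\lambda\to f$ in $L^2(I)$; since $I$ is bounded, $\|f-u_\lambda\|_{L^1}\le |I|^{1/2}\|f-u_\lambda\|_{L^2}\to 0$, so $\|f-u_\lambda\|_{BV}=J(f-u_\lambda)+\|f-u_\lambda\|_{L^1}\to 0$ as well.

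The step I expect to be delicate is the passage from the measure inequality to the assertion that $f'$, $u_\lambda'$ and $f'-u_\lambda'$ admit a common Hahn decomposition. Because $(f')^+$ and $(f')^-$ are mutually singular, this must be argued on arbitrary Borel subsets of $P$ and $N$ rather than pointwise, and one should make sure the restriction of $u_\lambda'$ to $P$ really is its positive variation (and to $N$ its negative variation). Everything after that reduces to evaluating finite positive measures on $P$ and $N$.
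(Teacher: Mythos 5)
Your proposal is correct and follows essentially the same route as the paper's own proof: both combine the Hahn decomposition of $f'$ with the fundamental estimate of Theorem~\ref{thm:L-S-consequence} to obtain $0\leq (u_\lambda')^\pm\leq (f')^\pm$, reduce the identity to Jordan-decomposition bookkeeping, and invoke Proposition~\ref{thm:L2-continuity} for the limits. Your write-up is in fact slightly more explicit than the paper's (the common positive/negative sets for $f'$, $u_\lambda'$ and $f'-u_\lambda'$, and the $L^1$ estimate needed for the $\|\cdot\|_{BV}$ convergence are spelled out rather than left implicit), but the underlying argument is the same.
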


\begin{proof}
The measures $(f')^+$ and $(f')^-$ are concentrated on disjoint measurable sets (Hahn decomposition, see \cite[Sec. 6.14]{Rudin-1986}), so Proposition~\ref{thm:L-S-consequence} implies the pair of inequalities, $0\leq (u_\lambda')^+\leq (f')^+$ and $0\leq (u_\lambda')^-\leq (f')^-$. A direct calculation, using the fact that $J(v)=(v')^+(I) + (v')^-(I)$ for any function $v\in BV(I)$, yields
\begin{align*}
J(f-u_\lambda) &= (f'-u_\lambda')^+(I) +(f'-u_\lambda')^-(I)\\[1mm]
&=(f')^+(I) - (u_\lambda')^+(I) + (f')^-(I) - (u_\lambda)^-(I)\\[1mm]
&= J(f)-J(u_\lambda),
\end{align*}
where the right hand-side tends to zero as $\lambda\to 0+$, by Theorem~\ref{thm:L2-continuity}.\qed
\end{proof}

Theorem~\ref{thm:L-S-consequence} also implies the first part of the following

\begin{proposition}
	If $f$ is piecewise constant function on $I$, then so is $u_\lambda$ for all $\lambda >0$. Moreover, there exists a number $\bar{\lambda}>0$ and a piecewise linear function $\bar{\xi}\in K$ such that $\xi_\lambda=\bar{\xi}$ for all $\lambda$, $0<\lambda \leq \bar{\lambda}$.
\end{proposition}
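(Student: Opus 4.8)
The plan is to prove the two assertions by different means: the first is an immediate corollary of the fundamental estimate, while the second requires a finite-dimensional perturbation analysis of the taut-string problem.

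\emph{First assertion.} I would read it off Theorem~\ref{thm:L-S-consequence}. Let the jumps of $f$ occur at the interior nodes $a=t_0<t_1<\dots<t_{n-1}<t_n=b$, with $f$ equal to the constant $a_i$ on $(t_{i-1},t_i)$, so that $f'=\sum_i c_i\delta_{t_i}$ with $c_i=a_{i+1}-a_i$. Then both $(f')^{+}$ and $(f')^{-}$ are supported on the finite set $\{t_1,\dots,t_{n-1}\}$, and the estimate $-(f')^{-}\le u_\lambda'\le (f')^{+}$ forces the signed measure $u_\lambda'$ to vanish on every Borel set disjoint from that finite set. Hence $u_\lambda'$ is again a finite combination of point masses at the $t_i$, which is precisely to say that $u_\lambda$ is piecewise constant; equivalently, the taut string $W_\lambda=F-\lambda\xi_\lambda$ is piecewise linear on the mesh $\{t_i\}$.

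\emph{Second assertion.} Here I would reduce the taut-string minimization (\ref{eq:energy}) to a finite-dimensional quadratic program using the structure just obtained. Since $F$ and $W_\lambda$ are both affine on each $(t_{i-1},t_i)$, the tube condition $F-\lambda\le W_\lambda\le F+\lambda$ is equivalent to the nodal inequalities $F(t_i)-\lambda\le W_\lambda(t_i)\le F(t_i)+\lambda$, and the energy collapses to the finite sum $\tfrac12\sum_i (w_i-w_{i-1})^2/h_i$ with $w_i:=W_\lambda(t_i)$ and $h_i:=t_i-t_{i-1}$. Introducing the nodal values of the dual variable, $\eta_i:=\xi_\lambda(t_i)=(F(t_i)-w_i)/\lambda$, the constraint becomes the $\lambda$-independent box $-1\le\eta_i\le 1$, while a summation by parts (after discarding a constant and a factor $\lambda$) turns the objective into
\[
  \Phi_\lambda(\eta)=\sum_i c_i\,\eta_i+\frac{\lambda}{2}\sum_i\frac{(\eta_i-\eta_{i-1})^2}{h_i},
  \qquad \eta_0=\eta_n=0 .
\]
As $\lambda\to 0+$ this degenerates to the linear program $\min\sum_i c_i\eta_i$ over the box, whose unique minimizer is the vertex $\eta^{*}$ with $\eta_i^{*}=-\operatorname{sign}(c_i)$. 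Because the discrete Dirichlet form $Q(\eta)=\tfrac12\sum_i(\eta_i-\eta_{i-1})^2/h_i$ is positive definite, $\Phi_\lambda$ is strictly convex, and $\eta^{*}$ is its minimizer exactly when the KKT sign conditions $\partial_i\Phi_\lambda(\eta^{*})\ge 0$ (at nodes with $\eta_i^{*}=-1$) and $\partial_i\Phi_\lambda(\eta^{*})\le 0$ (at nodes with $\eta_i^{*}=+1$) hold. Since $\partial_i\Phi_\lambda(\eta^{*})=c_i+\lambda\,(\nabla Q(\eta^{*}))_i$ and the vertex is \emph{strict} (each $c_i\neq 0$ already carries the sign making the linear part satisfy the condition strictly), these conditions survive an $O(\lambda)$ perturbation: there is a threshold $\bar\lambda>0$ below which $\eta^{*}$ remains the exact minimizer. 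Consequently $\xi_\lambda$ equals, for all $0<\lambda\le\bar\lambda$, the fixed piecewise linear function $\bar\xi\in K$ determined by $\bar\xi(t_i)=-\operatorname{sign}(c_i)$ and $\bar\xi(a)=\bar\xi(b)=0$.

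The main obstacle is the second assertion, and within it two points deserve care. First, the passage to the finite box program must be justified: it hinges on the piecewise-linear form of $W_\lambda$ from the first part (so that infinitely many tube constraints reduce to finitely many nodal ones and the energy to a finite sum), together with the remark that a minimizer over $T_\lambda$ which is itself piecewise linear on the mesh automatically minimizes over the piecewise-linear subclass. Second, and most delicately, one must verify strict complementarity at the limiting vertex $\eta^{*}$ --- that each $c_i$ is nonzero and carries the sign making $\eta^{*}$ a strict optimum --- for this is exactly what makes $\bar\lambda$ positive rather than zero. The resulting threshold, $\bar\lambda=\min_i |c_i|/|(\nabla Q(\eta^{*}))_i|$ over the indices where the perturbation fights the constraint, can be checked against the step-function example, where it reproduces $\bar\lambda=1=\|F\|_\infty$.
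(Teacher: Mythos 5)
Your proposal is correct, and for the first half it is exactly the paper's intended argument: the paper explicitly says that Theorem~\ref{thm:L-S-consequence} implies this part, and your measure-theoretic reading (both $(f')^{+}$ and $(f')^{-}$ are concentrated on the finite jump set, so the sandwich forces $u_\lambda'$ to be a finite sum of point masses there) is the natural way to carry it out. For the second half your route differs from the one the paper points to. The paper suggests using the converse (uniqueness) part of Theorem~\ref{thm:main}: guess $\bar{\xi}$ as the piecewise linear interpolant of the nodal values $-\operatorname{sign}(c_i)$, set $\bar{u}=f-\lambda\bar{\xi}'$, and verify that for small $\lambda$ the pair satisfies (\ref{eq:fundamental-rel}) and (\ref{eq:TV}) --- the point being that for $\lambda$ small the jumps of $\bar{u}$ retain the signs of the $c_i$, so $J(\bar{u})=\langle \bar{u},\bar{\xi}'\rangle$ --- whence $\bar{u}=u_\lambda$ and $\bar{\xi}=\xi_\lambda$ by uniqueness. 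You instead reduce the taut-string problem to a box-constrained quadratic program in the nodal values, identify its $\lambda\to 0$ LP limit, and run a KKT perturbation argument; the reduction itself is sound (you correctly note that the piecewise-linearity of $W_\lambda$ from part one turns the tube constraints into nodal ones and that the global minimizer, being in the piecewise-linear subclass, minimizes over it, where strict convexity gives uniqueness). Both proofs hinge on the same strict-complementarity fact --- each genuine jump has $c_i\neq 0$, which pins $\xi_\lambda(t_i)=-\operatorname{sign}(c_i)$ for small $\lambda$ --- but yours buys an explicit threshold $\bar{\lambda}=\min_i |c_i|/|(\nabla Q(\eta^{*}))_i|$ (and correctly reproduces $\bar{\lambda}=1$ in the paper's step-function example), at the cost of the discretization bookkeeping that the duality route avoids. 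Two details you should state explicitly to close the argument: the nodes must be genuine jumps, $c_i\neq 0$ for all $i$ (otherwise merge the adjacent intervals; if $f$ is constant the claim is trivial with $\bar{\xi}=0$), since a vanishing $c_i$ destroys uniqueness of the LP minimizer and with it your strict vertex; and the formula for $\bar{\lambda}$ needs the convention that the minimum over an empty set of ``fighting'' indices (all gradient components already have the favourable sign) is an arbitrary positive number, so that $\bar{\lambda}>0$ in every case.
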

	
The latter half of the proposition can be proved using the characterization of solutions in Theorem~\ref{thm:main}. We mention this result (but omit its proof---the first part being easy, the second, somewhat lengthy) because it implies what is possibly the strongest imaginable approximation result:

\begin{proposition}
	If $f$ is piecewise constant, then $\| f -u_\lambda \|_{L^2(I)} = O(\lambda)$, $\lambda\to 0+$. 
\end{proposition}

\begin{proof}
	We know from (\ref{eq:an_inequality}) that $(1/2)\| f - u_\lambda \|^2_{L^2(I)} \leq \lambda ( J(f) - J(u_\lambda))$ so an estimate of the difference $J(f)-J(u_\lambda)$ is needed. By Theorem~\ref{thm:main}, $J(u_\lambda) = \langle u_\lambda, \xi_\lambda'\rangle_{L^2(I)}$. Since $\xi_\lambda = \bar{\xi}$ when $\lambda$ is close to zero it follows that 
	\[
	J(f) = \lim_{\lambda\to 0+} J(u_\lambda) = \lim_{\lambda\to 0+} \langle u_\lambda, \bar{\xi}'\rangle_{L^2(I)} = \langle f, \bar{\xi}'\rangle_{L^2(I)}.
	\]
	Computing the scalar product of $u_\lambda = f-\lambda \xi_\lambda'$ and $\xi_\lambda'=\bar{\xi}'$ yields $J(u_\lambda) = \langle u_\lambda , \bar{\xi}'\rangle = \langle f-\lambda \bar{\xi}', \bar{\xi}'\rangle = J(f) -\lambda \| \bar{\xi}' \|^2$. Hence $J(f) -J(u_\lambda) = O(\lambda)$, $\lambda\to 0+$.\qed	
\end{proof}

Our interest in the various limits as $\lambda\to 0+$ is motivated by the fact that $\lambda \mapsto u_\lambda$ is a semi-group; statements about limits at $\lambda=0$ can be translated to limits at any $\lambda >0$. 

\begin{proposition}[Semi-group property]\label{thm:semi-group}
	Let $f\in L^2(I)$. With the convention (mentioned above) that $u_0=f$ the formula 
	\[
	(u_\lambda)_\mu = u_{\lambda+\mu}
	\]
	holds for all $\lambda,\mu\geq 0$.
\end{proposition}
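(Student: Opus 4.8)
The plan is to prove the semi-group property $(u_\lambda)_\mu = u_{\lambda+\mu}$ by exploiting the precise characterization of ROF minimizers provided in Theorem~\ref{thm:main}. Recall that for an input signal $g\in L^2(I)$, the denoised signal $v$ together with the dual variable $\eta\in K$ is characterized uniquely by the two conditions $v = g - \mu\eta'$ and $J(v) = \langle v, \eta'\rangle_{L^2(I)}$ (the converse part of Theorem~\ref{thm:main}). The strategy is therefore to produce an explicit candidate dual variable for the two-stage denoising $(u_\lambda)_\mu$ and to verify that the candidate pair $\big(u_{\lambda+\mu}, \text{something in }K\big)$ satisfies these two characterizing identities; uniqueness then forces $(u_\lambda)_\mu = u_{\lambda+\mu}$.

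\medskip

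\noindent
First I would record the data for the three minimization problems. Write $u_\lambda = f - \lambda\xi_\lambda'$ with $\xi_\lambda = P_K(\lambda^{-1}F)$, and $u_{\lambda+\mu} = f - (\lambda+\mu)\zeta'$ with $\zeta = P_K((\lambda+\mu)^{-1}F)$; these satisfy $J(u_\lambda) = \langle u_\lambda,\xi_\lambda'\rangle$ and $J(u_{\lambda+\mu}) = \langle u_{\lambda+\mu},\zeta'\rangle$. Now denote by $w$ the result of denoising $u_\lambda$ with parameter $\mu$, so $w = u_\lambda - \mu\eta'$ where $\eta = P_K(\mu^{-1}U_\lambda)$ and $U_\lambda(x) = \int_a^x u_\lambda$ is the cumulative signal of $u_\lambda$. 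Observe that $U_\lambda = F - \lambda\xi_\lambda$ (integrate $u_\lambda = f-\lambda\xi_\lambda'$ and use $\xi_\lambda\in H_0^1$), which is exactly the taut string $W_\lambda$ from the proof of Theorem~\ref{thm:equivalence}. Substituting back, $w = u_\lambda - \mu\eta' = f - \lambda\xi_\lambda' - \mu\eta'$.

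\medskip

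\noindent
The decisive algebraic step is to guess that the composite dual variable $\zeta$ for the single-stage problem $u_{\lambda+\mu}$ coincides with the convex-combination-type object $(\lambda+\mu)^{-1}(\lambda\xi_\lambda + \mu\eta)$, which lies in $K$ since $K$ is convex and $\|\xi_\lambda\|_\infty,\|\eta\|_\infty\le 1$. Granting this, $(\lambda+\mu)\zeta' = \lambda\xi_\lambda' + \mu\eta'$, whence $f - (\lambda+\mu)\zeta' = f - \lambda\xi_\lambda' - \mu\eta' = w$, so $w$ has exactly the form $(\ref{eq:fundamental-rel})$ required for it to be the ROF minimizer $u_{\lambda+\mu}$. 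It then remains to verify the second identity $J(w) = \langle w, \zeta'\rangle_{L^2(I)}$, after which the converse part of Theorem~\ref{thm:main} delivers $w = u_{\lambda+\mu}$, i.e. $(u_\lambda)_\mu = u_{\lambda+\mu}$.

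\medskip

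\noindent
The hard part will be justifying the guessed identity for $\zeta$, equivalently proving $J(w) = \langle w,\zeta'\rangle$. My plan for this is to chain together the maximality conditions of the two projections. Writing $J(w)=\langle w,\eta'\rangle$ (the characterization of $w$ as denoising of $u_\lambda$) and using $w = f - \lambda\xi_\lambda' - \mu\eta'$, I would compare $\langle w,\eta'\rangle$ and $\langle w,\xi_\lambda'\rangle$ against $J(w)$ via the variational inequality $\langle w,\xi'\rangle \le J(w)\|\xi\|_\infty \le J(w)$ valid for every $\xi\in K$ (inequality $(\ref{eq:TVinequality})$, combined with $\|\xi\|_\infty\le1$). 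The delicate point is to show both $\langle w,\xi_\lambda'\rangle = J(w)$ and $\langle w,\eta'\rangle = J(w)$ simultaneously, so that averaging gives $\langle w,\zeta'\rangle = J(w)$; this requires extracting complementary-slackness information from the projection characterization $\langle u_\lambda - \mu\eta', \mu\xi' - \mu\eta'\rangle\le 0$ for all $\xi\in K$ and matching it against the analogous relation for $\xi_\lambda$. Once $\langle w,\zeta'\rangle \ge J(w)$ is established (the reverse inequality being automatic), the proof closes. I expect this complementary-slackness bookkeeping — confirming that the two obstacles are active in a compatible way — to be the only genuinely nontrivial obstacle, the rest being the linear substitutions above.
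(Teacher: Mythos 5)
Your overall architecture is the same as the paper's: form the convex combination $\zeta=(\lambda+\mu)^{-1}(\lambda\xi_\lambda+\mu\eta)\in K$, observe that $w:=(u_\lambda)_\mu$ satisfies $w=f-(\lambda+\mu)\zeta'$ by pure algebra, and invoke the uniqueness (converse) part of Theorem~\ref{thm:main} once $J(w)=\langle w,\zeta'\rangle$ is verified. Since $J(w)=\langle w,\eta'\rangle$ is already part of the characterization of $w$ as the denoising of $u_\lambda$, everything reduces (exactly as in the paper) to the single identity $\langle w,\xi_\lambda'\rangle=J(w)$. This is where your proposal has a genuine gap: you do not prove this identity, you only announce that it should follow from ``complementary-slackness bookkeeping'' with the two projection characterizations, and that plan cannot succeed as stated. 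Write out what those characterizations actually give. The characterization of $\eta=P_K(\mu^{-1}U_\lambda)$ reads $\langle u_\lambda-\mu\eta',\xi'-\eta'\rangle\le 0$ for all $\xi\in K$; taking $\xi=\xi_\lambda$ this is precisely $\langle w,\xi_\lambda'-\eta'\rangle\le 0$, i.e.\ only the trivial direction $\langle w,\xi_\lambda'\rangle\le J(w)$. The characterization of $\xi_\lambda$ gives $\langle u_\lambda,\eta'\rangle\le\langle u_\lambda,\xi_\lambda'\rangle$, again an inequality pointing the wrong way. What must be shown is that the first of these inequalities is an \emph{equality}, and no formal combination of the two variational inequalities yields that: the needed reverse inequality encodes genuine information about how the contact sets (equivalently, the sign structure of the measure $w'$) relate to those of $u_\lambda'$. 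A strong sanity check: duality, the projection characterization, and one-homogeneity of $J$ hold verbatim for the multidimensional ROF model, where the semigroup property is known to fail; so no argument built only from those ingredients can close this step.

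The missing ingredient is exactly the paper's fundamental estimate. Applying Theorem~\ref{thm:L-S-consequence} with insignal $u_\lambda\in BV(I)$ gives $-(u_\lambda')^-\le w'\le (u_\lambda')^+$, whence (Proposition~\ref{thm:strong-convergence}) the identity $J(u_\lambda-w)=J(u_\lambda)-J(w)$. This is what replaces your slackness bookkeeping: combining it with $J(u_\lambda-w)\ge\langle u_\lambda-w,\xi_\lambda'\rangle$ and $J(u_\lambda)=\langle u_\lambda,\xi_\lambda'\rangle$ yields
\begin{equation*}
J(w)=J(u_\lambda)-J(u_\lambda-w)\le J(u_\lambda)-\langle u_\lambda-w,\xi_\lambda'\rangle=\langle w,\xi_\lambda'\rangle\le J(w)\, ,
\end{equation*}
so $\langle w,\xi_\lambda'\rangle=J(w)$, and averaging with $\langle w,\eta'\rangle=J(w)$ gives $\langle w,\zeta'\rangle=J(w)$, completing the proof along the route you set up. In short: your reduction is correct and identical to the paper's, but the one step you defer is the actual mathematical content of the proposition, and it requires the Lewy--Stampacchia-based estimate (or some equivalent one-dimensional structural fact), not the projection inequalities alone.
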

Here we have tweaked the notation slightly to make the statement more compact: By using the letter $u$ in place of $f$ for the insignal, the operation of denoising the signal for some $\lambda>0$ is indicated by adding the subscript `$\lambda$´ to the original signal $u$ thus obtaining $u_\lambda$. This makes sense even for $\lambda=0$ if we agree to set $u_0=u$.

A proof of the semi-group property can be found in \cite{Scherzer-etal-2009}. However, the fundamental estimate in  Theorem~\ref{thm:L-S-consequence} and the characterization of the ROF-minimizer in Theorem~\ref{thm:main} allow us to present short and very direct proof of this result:

\begin{proof}
	The assertion holds trivially if either $\lambda$ or $\mu$ equals zero, so we may assume that $\lambda, \mu > 0$. The idea of the proof is then to set $\bar{u} = (u_\lambda)_\mu$ and show that there exists a function $\bar{\xi}\in K$ such that
	\[
	\begin{cases}
	\bar{u} = f - (\lambda+\mu)\bar{\xi}'\quad\text{and }\\[2mm]
	J(\bar{u}) = \langle \bar{u} , \bar{\xi}' \rangle.
	\end{cases}.
	\]
	The characterization of solutions to the ROF model in Theorem~\ref{thm:main} then implies that $\bar{u}$ equals $u_{\lambda+\mu}$. Since $u_\lambda$ and $\bar{u}$ are the ROF-minimizers of $E_\lambda(f;\cdot)$ and $E_\mu(u_\lambda;\cdot)$, respectively, they both satisfy the conditions (\ref{eq:fundamental-rel}) and (\ref{eq:TV}), that is 
	\[
	\begin{cases}
	u_\lambda = f - \lambda \xi_\lambda'\; ,\\[2mm]
	J(u_\lambda) = \langle u_\lambda , \xi_\lambda'\rangle,
	\end{cases}
	\quad\text{and}\qquad
	\begin{cases}
	\bar{u} = u_\lambda - \mu\bar{\xi}_\mu'\quad\; ,\\[2mm]
	J(\bar{u}) = \langle \bar{u} , \bar{\xi}_\mu' \rangle,
	\end{cases}.
	\]
	for a uniquely determined pair of functions $\xi_\lambda$ and $\bar{\xi}_\mu$ in $K$.
	Now, if we set
	\[
	\bar{\xi} = \frac{\lambda\xi_\lambda+ \mu\bar{\xi}_\mu}{\lambda + \mu}
	\]
	then $\bar{\xi}\in K$ because it is the convex combination of two elements of $K$. Using what is known about $u_\lambda$ and $\bar{u}$, the following calculation reveals why we make this definition of $\bar{\xi}$:
	\begin{align*}
	f - (\lambda+\mu)\bar{\xi}' &= f - \lambda\xi_\lambda'-\mu\bar{\xi}_\mu \\[2mm]
	&=  u_\lambda -\mu\bar{\xi}_\mu \\[2mm]
	&= \bar{u},
	\end{align*}
	hence $\bar{u}$ and $\bar{\xi}$ fulfil the condition (\ref{eq:fundamental-rel}) by construction. It remains to verify that (\ref{eq:TV}) is fulfilled as well. Since
	\begin{align*}
	\langle \bar{u} , \bar{\xi}'\rangle & = \frac{\lambda}{\lambda+\mu}\langle \bar{u} , \xi_\lambda'\rangle + \frac{\mu}{\lambda+\mu}\langle \bar{u} , \bar{\xi}_\mu'\rangle \\[2mm]
	&=\frac{\lambda}{\lambda+\mu}\langle \bar{u} , \xi_\lambda'\rangle + \frac{\mu}{\lambda+\mu}J(\bar{u}) \, .
	\end{align*}
	we see that the second condition follows if it can show that $\langle \bar{u} , \xi_\lambda'\rangle = J(\bar{u})$. This essentially follows from the identity in Proposition~\ref{thm:strong-convergence} which states that $J(\bar{u})=J(u_\lambda) - J(u_\lambda-\bar{u})$. In fact, using this identity we get the inequality
	\begin{align*}
	J(\bar{u}) &\leq J(u_\lambda) - \langle u_\lambda -\bar{u} , \xi_\lambda'\rangle \\[2mm]
	&= J(u_\lambda) - J(u_\lambda) + \langle \bar{u} , \xi_\lambda'\rangle = \langle \bar{u} , \xi_\lambda'\rangle
	\end{align*}
	But $J(\bar{u}) \geq \langle \bar{u},\xi'\rangle$ for all $\xi\in K$, so $J(\bar{u}) = \langle \bar{u} ,\xi_\lambda'\rangle$, and the proof is complete.\qed
\end{proof}

The last part of the proof yields

\begin{corollary}
	If $\lambda > 0 0$ then $J(u_{\lambda}) = \langle u_{\lambda} , \xi_\mu' \rangle_{L^2(I)}$ for all $\mu$, $0<\mu\leq \lambda$.
\end{corollary}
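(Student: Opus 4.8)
The plan is to read the asserted identity directly off the final paragraph of the proof of Proposition~\ref{thm:semi-group}, after relabelling the parameters. First I would dispose of the boundary case $\mu=\lambda$: there the claim is merely the characterization $J(u_\lambda)=\langle u_\lambda,\xi_\lambda'\rangle$ furnished by Theorem~\ref{thm:main}, so I may assume $0<\mu<\lambda$.

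For such $\mu$, the semi-group property (Proposition~\ref{thm:semi-group}) lets me factor the denoising as $u_\lambda=(u_\mu)_{\lambda-\mu}$; that is, $u_\lambda$ arises by first denoising $f$ at the smaller level $\mu$, producing $u_\mu$ together with its dual $\xi_\mu\in K$ (so that $u_\mu=f-\mu\xi_\mu'$ and $J(u_\mu)=\langle u_\mu,\xi_\mu'\rangle$), and then denoising the intermediate signal $u_\mu$ at level $\lambda-\mu$. I would then transcribe the computation in the last part of the proof of Proposition~\ref{thm:semi-group} verbatim, with the roles of the two parameters permuted: the first regularization level is now $\mu$ (in place of $\lambda$ there), the second is $\lambda-\mu$ (in place of $\mu$ there), and $\bar u$ is now $u_\lambda$. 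Concretely, apply the identity of Proposition~\ref{thm:strong-convergence} with insignal $u_\mu\in BV(I)$ and its denoising $u_\lambda$ to obtain $J(u_\lambda)=J(u_\mu)-J(u_\mu-u_\lambda)$; bound $J(u_\mu-u_\lambda)\geq\langle u_\mu-u_\lambda,\xi_\mu'\rangle$ using $\xi_\mu\in K$; and insert $J(u_\mu)=\langle u_\mu,\xi_\mu'\rangle$. This yields $J(u_\lambda)\leq\langle u_\lambda,\xi_\mu'\rangle$, while the reverse inequality $J(u_\lambda)\geq\langle u_\lambda,\xi_\mu'\rangle$ is immediate from the definition of the total variation in Lemma~\ref{thm:newTVdef} (again because $\xi_\mu\in K$). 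Equality therefore holds, which is the assertion.

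The only point requiring care is the legitimacy of invoking Proposition~\ref{thm:strong-convergence} with $u_\mu$ in the role of the insignal rather than $f$: this is permissible precisely because $u_\mu\in BV(I)$ by Theorem~\ref{thm:main}, so that the underlying fundamental estimate (Theorem~\ref{thm:L-S-consequence}) applies to $u_\mu$, and because the semi-group property guarantees that denoising $u_\mu$ at level $\lambda-\mu$ reproduces $u_\lambda$. Apart from this bookkeeping the argument is a direct transcription of steps already carried out in the proof of the semi-group property, so I expect no genuine obstacle beyond keeping the index substitution straight.
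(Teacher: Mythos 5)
Your proposal is correct and is essentially the paper's own argument: the paper disposes of the corollary by remarking that ``the last part of the proof'' of Proposition~\ref{thm:semi-group} yields it, and your proof is precisely that final computation (the Proposition~\ref{thm:strong-convergence} identity, the bound $J(u_\mu-u_\lambda)\geq\langle u_\mu-u_\lambda,\xi_\mu'\rangle$, and the reverse inequality from the definition of $J$) transcribed with the parameters relabelled, together with the same bookkeeping about applying Proposition~\ref{thm:strong-convergence} to the intermediate signal $u_\mu\in BV(I)$.
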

Thus, in the computation of the total variation of $u_\lambda$ any of the previous $\xi_\mu$'s can be used.

\section{Application to Isotonic Regression}
We briefly outline how the theory developed earlier can be modified in order to derive the so-called ``lower convex envelope'' interpretation of the solution to the problem of isotonic regression. Isotonic regression is a method from mathematical statistics used for non-parametric estimation of probability distributions, see for instance Anevski and Soulier~\cite{Anevski-Soulier-2011}. It is a least-squares problem with a monotonicity constraints: given $f\in L^2(I)$, find the non-decreasing function $u_\uparrow\in L^2(I)$ which solves the minimization problem,
\begin{equation}\label{eq:IRdef}
\min_{u\in L^2_\uparrow(I)}\frac{1}{2}\|u-f\|_{L^2(I)}^2\;,
\end{equation}
where $L^2_\uparrow(I)$ denotes the set of all non-decreasing functions in $L^2(I)$.

\begin{figure*}
	\centering
	\subfigure[][The piecewise constant input signal {$f$} and the monotonic solution $u_\uparrow$ to the isotonic regression problem.]{
		\begin{tikzpicture}[scale=.575,>=stealth]
		\draw [thick,->] (-1,0) -- (7,0) node[anchor=north] {\small $x$};
		\draw [thick,->] (-.5,-2.5) -- (-.5,2.5) node[anchor=west] {\small $y$};
		\draw [thick] (0,-1) -- (.95,-1);
		\draw [fill] (0,-1) circle (.05);
		\draw (1,-1) circle (.05);
		\draw [thick] (1,-2) -- (1.95,-2);
		\draw [fill] (1,-2) circle (.05);
		\draw (2,-2) circle (.05);
		\draw [thick] (2,-.45) -- (2.95,-.45);
		\draw [fill] (2,-.45) circle (.05);
		\draw (3,-.45) circle (.05);
		\draw [thick] (3,1) -- (3.95,1);
		\draw [fill] (3,1) circle (.05);
		\draw (4,1) circle (.05);
		\draw [thick] (4,.5) -- (4.95,.5);
		\draw [fill] (4,.5) circle (.05);
		\draw (5,.5) circle (.05);
		\draw [thick] (5,2.05) -- (5.95,2.05);
		\draw [fill] (5,2.05) circle (.05);
		\draw [fill] (6,2.05) circle (.05) node [anchor=south west]{\small $f$};
		
		\draw [thick,orange] (0,-1.5) -- (1.95,-1.5);
		\draw [fill,orange] (0,-1.5) circle (.05);
		\draw [orange] (2,-1.5) circle (.05);
		\draw [thick,orange] (2,-.55) -- (2.95,-.55);
		\draw [fill,orange] (2,-.55) circle (.05);
		\draw [orange] (3,-.55) circle (.05);
		\draw [thick,orange] (3,.75) -- (4.95,.75);
		\draw [fill,orange] (3,.75) circle (.05);
		\draw [orange] (5,.75) circle (.05);
		\draw [thick,orange] (5,1.95) -- (6,1.95);
		\draw [fill,orange] (5,1.95) circle (.05);
		\draw [fill,orange] (6,1.95) circle (.05) node [anchor=north west]{\small $u_{\uparrow} = W_{\uparrow}'$};
		
		\draw [thick] (0,.1) -- (0,-.1) node[anchor=north] {\small $a$};
		\draw [thick] (1,.1) -- (1,-.1);
		\draw [thick] (2,.1) -- (2,-.1);
		\draw [thick] (3,.1) -- (3,-.1);
		\draw [thick] (4,.1) -- (4,-.1);
		\draw [thick] (5,.1) -- (5,-.1);
		\draw [thick] (6,.1) -- (6,-.1) node[anchor=north] {\small $b$};
		\end{tikzpicture}
	}
	\qquad\qquad
	\subfigure[][The cumulative signal {$F$} and the corresponding lower convex envelope (or taut string) $W_\uparrow$.]{
		\begin{tikzpicture}[scale=0.575,>=stealth]
		\draw [thick,->] (-1,0) -- (7,0) node[anchor=north] {\small $x$};
		\draw [thick,->] (-.5,-3.5) -- (-.5,1.5) node[anchor=west] {\small $y$};
		\draw [thick] (0,-.1) -- (0,.1) node[anchor=south] {\small $a$};
		\draw [thick] (1,.1) -- (1,-.1);
		\draw [thick] (2,.1) -- (2,-.1);
		\draw [thick] (3,.1) -- (3,-.1);
		\draw [thick] (4,.1) -- (4,-.1);
		\draw [thick] (5,.1) -- (5,-.1);
		\draw [thick] (6,-.1) -- (6,.1) node[anchor=south] {\small $b$};
		
		\draw [thick] (0,0) -- (1,-1) -- (2,-2.98) -- (3,-3.48) -- (4,-2.5) -- (4.95,-2) node [anchor=south east]{\small $F$} -- (5.95,0);
		\draw [thick,orange] (0,0) circle (.1);
		\draw [fill] (0,0) circle (.07);
		\draw [thick,orange] (5.975,0) circle (.1);
		\draw [fill] (5.975,0) circle (.07);
		
		\draw [thick, orange] (.02,-.08) -- (2,-3.05) -- (3,-3.55) -- (5,-2.05) node [anchor=north west,orange]{\small\bf $W_{\uparrow}$} -- (5.99,-.07);

		\end{tikzpicture}
	}
	
	\caption{A graphical illustrations of the taut string interpretation of isotonic regression.}\label{fig:IR}
\end{figure*}
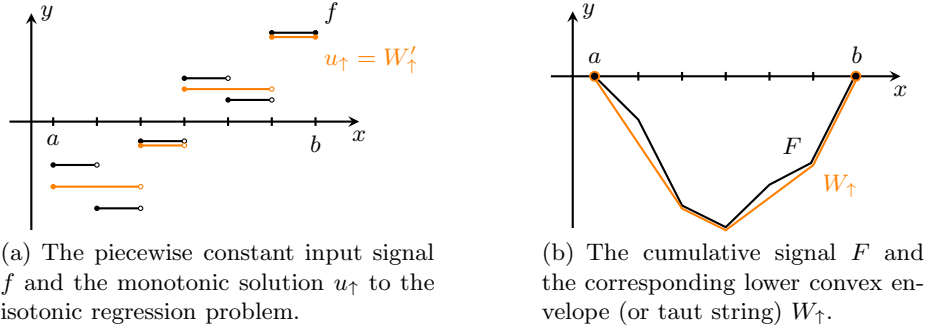

The idea is to re-formulate (\ref{eq:IRdef}) as an unconstrained optimization problem by replacing the total variation term $J$ of the ROF functional by regularization term $J_\uparrow$ which can distinguish between functions that are non-decreasing or not. To achieve this we set $K_+ = \big\{ \xi\in H_0^1(I)\, :\, \xi(x)\geq 0\text{ for all $x\in I$}  \big\}$ and define $J_\uparrow(u) = \sup_{\xi\in K_+} \, \langle u,\xi'\rangle_{L^2(I)}$. It can be shown that
\[
J_\uparrow(u) = 
\begin{cases}
0 & \text{if $u\in L_\uparrow^2(I)$}\;,\\
+\infty & \text{otherwise.}
\end{cases}
\]
The isotonic regression problem (\ref{eq:IRdef}) now becomes equivalent to finding the minimizer $u_\uparrow$ in $L^2(I)$ of the functional
\begin{equation}
\label{eq:newIR}
E_\uparrow(u) := J_\uparrow(u) + \frac{1}{2}\|u-f\|_{L^2(I)}^2.
\end{equation}
Notice that there is no need for a positive weight in this functional because the regularizer assumes only the values zero and infinity.

Again we may assume the mean value $f$ to be zero so that the cumulative function $F$ belongs to $H_0^1(I)$.
Mimicking the proof of Theorem~\ref{thm:main} we get:
\[
\min_{u\in L^2(I)} E_\uparrow(u) = \max_{W\in T}\frac{1}{2}\Big\{\|f\|^2 - \frac{1}{2}\| W'\|_{L^2(I)}^2\Big\}
\]
where $W=F-\xi$, $\xi\in K_+$, and $T=\{ W\in H_0^1(I)\, :\, W(x)\leq F(x), \text{ $x \in I$}\}$. The minimization of (\ref{eq:newIR}) is equivalent to the obstacle problem $\min_{W\in T} \frac{1}{2}\| W'\|_{L^2(I)}^2$ which admits a unique solution $W_\uparrow$ by the Projection theorem. It follows that (\ref{eq:newIR}) also has the unique solution $u_\uparrow = W_\uparrow'\quad(\text{distributional derivative})$ which belongs to $L_\uparrow^2(I)$ because $E_\uparrow(u_\uparrow)$ is finite.

The solution $W_\uparrow$ of the obstacle problem is automatically a convex function. In fact, by optimality, $W_\uparrow$ is the maximal convex function lying below $F$, i.e., it is the {\em lower convex envelope} of $F$. This interpretation is illustrated for a piecewise constant signal $f$ in Fig.~\ref{fig:IR}. Similar problems are considered in the multidimensional case, using higer-order methods (the space of functions with bounded Hessians), in Hinterberger and Scherzer~\cite{Hinterberger-Scherzer-2006}.

\subsubsection*{Acknowledgements} I want to thank Viktor Larsson at the Centre for Mathematical Sciences, Lund University, for reading and commenting the first draft of this paper.

\newpage
\appendix

\section{Proof of the ``Same Solution-Property''}\label{sec:appendix}

As promised in the introduction, we are going to prove that the solution of the minimization problem (\ref{eq:energy}) in {\sc step 3} of the Taut string algorithm coincides with the solution of the shortest path problem (\ref{eq:SPP}). In fact we prove the slightly more general statement:

\begin{lemma}\label{thm:same-solution-property}
	Let $H$ denote any strictly convex $C^1$-function defined on $\real$ and set
	\[
	L_H(W)=\int_I H(W'(x))\,dx\,.
	\]
	Then the problem $\min_{W\in T_\lambda} L_H(W)$ has precisely the same solution as the minimization problem $\min_{W\in T_\lambda}E(W)$ in (\ref{eq:energy}).
\end{lemma}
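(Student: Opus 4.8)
We must show that the taut-string problem $\min_{W\in T_\lambda} E(W)$ with the quadratic energy $E(W)=\tfrac12\int_I W'^2\,dx$ and the generalized length problem $\min_{W\in T_\lambda} L_H(W)$ with $L_H(W)=\int_I H(W')\,dx$ ($H$ strictly convex and $C^1$) have the same minimizer. The feasible set $T_\lambda$ is convex and closed, and both functionals are convex; for the quadratic case we already know (via the Projection Theorem, as used in the proof of Theorem~\ref{thm:main}) that a unique minimizer $W_\lambda$ exists.

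**The plan.** The plan is to exploit a pointwise monotonicity structure that both functionals share, rather than to compare the Euler-Lagrange equations directly (which is awkward because the constraint $F-\lambda\le W\le F+\lambda$ is active on unknown subsets of $I$). The key geometric fact about the quadratic taut string is that $W_\lambda'$ is \emph{monotone on each maximal open interval where $W_\lambda$ does not touch the obstacles} $F\pm\lambda$; in fact on each such ``free'' interval $W_\lambda$ is affine, so $W_\lambda'$ is constant, and where the string touches an obstacle it must follow that obstacle. I would first record this structural description of $W_\lambda$. The decisive analytic step is then a \emph{rearrangement / exchange} argument: I claim that $W_\lambda$ is characterized among all $W\in T_\lambda$ by a variational inequality of the form $\langle W_\lambda', (W-W_\lambda)'\rangle_{L^2}\ge 0$ for all $W\in T_\lambda$ (this is exactly the Projection Theorem characterization, Proposition~\ref{thm:projection}, since $E$ is the squared $H_0^1$-distance to the origin). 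The idea is to show that this same $W_\lambda$ satisfies the corresponding variational inequality for $L_H$, namely
\[
\int_I H'(W_\lambda')\,(W-W_\lambda)'\,dx \ge 0\qquad\text{for all }W\in T_\lambda,
\]
which by convexity of $L_H$ is sufficient for $W_\lambda$ to minimize $L_H$ over the convex set $T_\lambda$.

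**Why the two variational inequalities coincide.** The mechanism I expect to work is that $H'$ is strictly increasing (since $H$ is strictly convex and $C^1$), so $H'$ merely reparametrizes the derivative \emph{monotonically}. On each free interval $W_\lambda'$ is a constant $c$, so $H'(W_\lambda')=H'(c)$ is also constant there, and the contribution to both inequalities is $(\text{const})\cdot\int(W-W_\lambda)'$, i.e. a telescoping boundary term depending only on the values of $W-W_\lambda$ at the endpoints of the free interval. At those endpoints $W_\lambda$ touches an obstacle, and the sign of the admissible variation $W-W_\lambda$ is forced by the obstacle constraint; this sign, together with the monotonicity of $H'$, makes each boundary term nonnegative for \emph{any} increasing transformation of the slope, not just the identity coming from $E$. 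Summing over all free intervals (and noting the contribution is zero on contact sets, where $W_\lambda'$ follows the obstacle and the variation is one-sided) yields the $L_H$-inequality. Thus the same $W_\lambda$ solves both, and strict convexity of $L_H$ gives uniqueness, so the two minimizers coincide.

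**The main obstacle.** The hard part will be making the structural claim and the summation rigorous without assuming extra regularity on $f$ (hence on $F$): the free set is a countable union of open intervals, the contact set can be complicated, and one must justify integration by parts / the telescoping argument when $W_\lambda'$ is only in $L^2$ and the obstacles $F\pm\lambda$ are merely $H_0^1$. I would handle this by working with the variational-inequality characterization throughout (which needs no pointwise Euler-Lagrange equation) and by reducing the sign analysis to the single pointwise inequality $\big(H'(s)-H'(t)\big)(s-t)\ge 0$ for the strictly increasing $H'$, applied with $s=W_\lambda'$ and a comparison slope; the delicate measure-theoretic bookkeeping on the contact set is where the real care is required, and I would isolate it as the one technical lemma of the argument.
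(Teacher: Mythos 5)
Your high-level strategy coincides with the paper's: characterize the minimizer of each functional by its variational inequality (sufficiency from convexity of $L_H$, uniqueness from strict convexity), and show that the quadratic minimizer $W_\lambda$ also satisfies
\[
\int_I H'(W_\lambda')\,(W'-W_\lambda')\,dx \;\geq\; 0 \qquad\text{for all } W\in T_\lambda\, .
\]
The gap lies in the mechanism you propose for proving this inequality. First, your claim that \emph{each} free interval contributes a nonnegative boundary term is false. On a maximal free interval $(\alpha,\beta)$ the contribution is $H'(c)\bigl[(W-W_\lambda)(\beta)-(W-W_\lambda)(\alpha)\bigr]$, where $c$ is the constant slope. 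If both endpoints lie on the \emph{same} obstacle, say $C_+$ (the string leaves $F+\lambda$ and returns to it along a chord, which happens e.g.\ for a W-shaped $F$), then both boundary values are merely $\leq 0$, their difference has no sign, and $H'(c)$ can have either sign (for $H(s)=\sqrt{1+s^2}$ one has $H'(c)<0$ whenever $c<0$); so individual terms can be negative. Nonnegativity only emerges after summing \emph{across} contact points (an Abel summation), and that requires knowing how the slopes vary from one free interval to the next. Second, the contribution of the contact set is not zero: the string can follow an obstacle on a set of positive measure, where $W_\lambda'=F'$ is only an $L^2$ function, and the integral of $H'(W_\lambda')(W'-W_\lambda')$ over that set must be estimated, not discarded.

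The device that repairs both defects---and which is exactly the ``technical lemma'' you postponed---is the one the paper uses. Plugging the one-sided perturbations $W=W_\lambda+\epsilon\xi$, with $\xi\geq 0$, $\xi\in C_0^1(I\setminus C_+)$, into the quadratic variational inequality (\ref{eq:VI}) shows that $-W_\lambda''$ is a positive measure on $I\setminus C_+$, i.e.\ $-W_\lambda'$ is nondecreasing on each component of $I\setminus C_+$; symmetrically it is nonincreasing on components of $I\setminus C_-$. Since $H'$ is nondecreasing, $-H'(W_\lambda')$ inherits exactly these monotonicity properties, so $\mu:=-\bigl(H'(W_\lambda')\bigr)'$ is a signed measure with $\operatorname{supp}\mu^+\subset C_-$ and $\operatorname{supp}\mu^-\subset C_+$. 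A single \emph{global} integration by parts (legitimate because $W-W_\lambda$ is continuous and vanishes at $a$ and $b$) then gives
\[
\int_I H'(W_\lambda')(W'-W_\lambda')\,dx \;=\; \int_I (W-W_\lambda)\,d\mu^+ \;-\; \int_I (W-W_\lambda)\,d\mu^- \;\geq\; 0\, ,
\]
since $W-W_\lambda\geq 0$ on $C_-$ and $W-W_\lambda\leq 0$ on $C_+$. This formulation never decomposes $I$ into free intervals, handles contact sets of positive measure automatically, and is precisely where the monotone variation of the slope across contact sets enters---the fact your per-interval argument tries, but fails, to capture. Without this step (or an equivalent rigorous summation-by-parts argument) your proof does not close.
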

The original problem is then solved by taking $H(s)=(1+s^2)^{1/2}$.

\begin{proof}
The idea of the proof is to verify that $W_\lambda:=\operatorname{arg\,min}_{W\in T_\lambda}E(W)$ solves the variational inequality:
\begin{equation}\label{eq:VI-for-SPP}
\int_I h(W_\lambda'(x)) (W'(x)-W_\lambda(x)')\, dx \geq 0\, , \text{ for all $W\in T_\lambda$},
\end{equation}
where $h=H'$. This condition is both necessary and sufficient for $W_\lambda$ to be a minimizer of $L_H$ over $T_\lambda$, and since $L_H$ is a strictly convex functional, there is at most one such minimizer.

Being the minimizer of $E$ over $T_\lambda$, $W_\lambda\in T_\lambda$ satisfies the variational inequality (which is a special case of (\ref{eq:VI-for-SPP}) if we take $H(s)=s$):
\begin{equation}\label{eq:VI}
\int_I W_\lambda'(W'-W_\lambda')\,dx \geq 0,\text{ for all $W\in T_\lambda$.}
\end{equation}
Set $C_+=\{ x\in I : W_\lambda(x)=F(x)+\lambda\}$ and $C_-=\{ x\in I : W_\lambda(x)=F(x)-\lambda\}$. These are the sets where the solution touches the upper and the lower obstacles, respectively. Since $F$ and $W_\lambda$ are continuous, both  sets are closed. In fact, $C_+$ and $C_-$ are compact because $\lambda>0$ implies that they do not reach the boundary of $I$. They are disjoint, $C_+\cap C_-=\emptyset$, and their union, $C=C_+\cup C_-$, is the contact set for $W_\lambda$. 

For any non-negative $\xi\in C_0^1(I\backslash C_+)$ there exists an $\epsilon >0$ such that $W:=W_\lambda + \epsilon\xi$ belongs to $T_\lambda$. If this $W$ is substituted into (\ref{eq:VI}) we find that
\[
\int_I W_\lambda'\xi'\,dx\geq 0\quad\text{for all $\xi\in C_0^1(I\backslash C_+)$ with $\xi\geq 0.$}
\]
It follows that $-W_\lambda''$ is a positive measure on $I\backslash C_+$, hence $-W_\lambda'$ is non-decreasing on each connected component of $I\backslash C_+$. Similarly one proves that $-W_\lambda'$ is non-increasing on each connected component of $I\backslash C_-$. This means, in particular, that $W_\lambda'$ constant on each connected component of $I\backslash C$.

Since $h$ is non-decreasing, the composite function $-h(W_\lambda')$ has the same monotonicity properties as $-W_\lambda'$. Therefore the distributional derivative $-h(W_\lambda')'$ is a positive measure $\mu^+$ on $I\backslash C_+$ and minus a positive measure $-\mu^-$ on $I\backslash C_-$. Clearly $\operatorname{supp} \mu^+\subset C_-$ and $\operatorname{supp} \mu^-\subset C_+$, so $-h(W_\lambda')'$ is a signed measure $\mu$ with the Jordan decomposition $\mu=\mu^+ -\mu^-$. The following calculation now verifies (\ref{eq:VI-for-SPP}): For any $W\in T_\lambda$ we have
\begin{align*}
\int_I h(W_\lambda')(W'-W_\lambda')\,dx &= \int_I W-W_\lambda\, d\mu \\
&= \int_I W-W_\lambda\,d\mu^+ - \int_I W-W_\lambda\,d\mu^- \geq 0
\end{align*}
which holds because $ W-W_\lambda \geq 0$ on $C_-$ and $ W-W_\lambda\leq 0$ on $C_+$.\qed
\end{proof}

\end{document}